\def\BibTeX{{\rm B\kern-.05em{\sc i\kern-.025em b}\kern-.08em
    T\kern-.1667em\lower.7ex\hbox{E}\kern-.125emX}}
\theoremstyle{plain}
\newcounter{thm}
\newtheorem{theorem}{Theorem}
\newtheorem{definition}[thm]{Definition}
\newtheorem{corollary}[thm]{Corollary}
\newcommand{\hide}[1]{}
\newcommand{\xhdr}[1]{\vspace{1.7mm}\noindent{{\bf #1.}}}
\newcommand{\eg}{{e.g.}\xspace}
\newcommand{\Secref}[1]{Sec.~\ref{#1}}
\newcommand{\Thmref}[1]{Thm.~\ref{#1}}
\newcommand{\Appref}[1]{Appendix~\ref{#1}}
\newcommand{\Algref}[1]{Alg.~\ref{#1}}
\newcommand{\Eqref}[1]{Eq.~\ref{#1}}
\DeclarePairedDelimiter\abs{\lvert}{\rvert}
\DeclarePairedDelimiter\norm{\lVert}{\rVert}
\newcommand{\defeq}{\mathrel{\vcentcolon=}}
\DeclareMathOperator{\Range}{Range}
\newcommand{\x}{\bm{x}}
\newcommand{\xp}{\bm{x'}}
\newcommand{\bb}{\bm{b}}
\newcommand{\cc}{\bm{c}}
\DeclareMathOperator{\compose}{Compose}
\DeclareMathOperator{\composegeneralized}{ComposeGeneralized}
\DeclareMathOperator{\refinetuples}{RefineTuples}
\DeclareMathOperator{\supp}{supp}
\DeclareMathOperator{\compguarantee}{CompGuarantee}
\newcommand{\rnum}[1]{\uppercase\expandafter{\romannumeral #1\relax}}
\begin{document}

\title{On the Choice of Databases in Differential Privacy Composition}

\author{\IEEEauthorblockN{Valentin Hartmann}
\IEEEauthorblockA{\textit{EPFL}\\
valentin.hartmann@epfl.ch}
\and
\IEEEauthorblockN{Vincent Bindschaedler}
\IEEEauthorblockA{\textit{University of Florida}\\
vbindsch@cise.ufl.edu}
\and
\IEEEauthorblockN{Robert West}
\IEEEauthorblockA{\textit{EPFL}\\
robert.west@epfl.ch}
}

\maketitle

\begin{abstract}
    Differential privacy (DP) is a widely applied paradigm for releasing data while maintaining user privacy. Its success is to a large part due to its composition property that guarantees privacy even in the case of multiple data releases.
Consequently, composition has received a lot of attention from the research community: there exist several composition theorems for adversaries with different amounts of flexibility in their choice of mechanisms. But apart from mechanisms, the adversary can also choose the databases on which these mechanisms are invoked. The classic tool for analyzing the composition of DP mechanisms, the so\hyp called composition experiment, neither allows for incorporating constraints on databases nor for different assumptions on the adversary's prior knowledge about database membership. We therefore propose a generalized composition experiment (GCE), which has this flexibility. We show that composition theorems that hold with respect to the classic composition experiment also hold with respect to the worst case of the GCE. This implies that existing composition theorems give a privacy guarantee for more cases than are explicitly covered by the classic composition experiment. Beyond these theoretical insights, we demonstrate two practical applications of the GCE: the first application is to give better privacy bounds in the presence of restrictions on the choice of databases; the second application is to reason about how the adversary's prior knowledge influences the privacy leakage. In this context, we show a connection between adversaries with an uninformative prior and subsampling, an important primitive in DP.
To the best of our knowledge, this paper is the first to analyze the interplay between the databases in DP composition, and thereby gives both a better understanding of composition and practical tools for obtaining better composition bounds.
\end{abstract}

\begin{IEEEkeywords}
differential privacy, differential privacy composition, databases
\end{IEEEkeywords}

\section{Introduction}
\label{sec:introduction}
Since its invention in 2006, differential privacy (DP) \cite{dwork2006calibrating} has become the de\hyp facto standard for releasing aggregate information about data in a privacy\hyp preserving way. It has been used in frequency estimation \cite{wang2017locally}, frequent itemset mining \cite{su2015differentially}, supervised \cite{abadi2016deep} and unsupervised \cite{su2016differentially} learning, and graph analysis \cite{kasiviswanathan2013analyzing}, to just name a few applications. Building blocks such as the Laplace mechanism \cite{dwork2006calibrating} or the Gauss mechanism \cite{dwork2014differential} allow for easily enhancing existing data analysis and machine learning (ML) methods with DP guarantees. DP has already been adopted by government agencies and companies such as the U.S. Census Bureau \cite{dajani2017modernization}, Google \cite{erlingsson2014rappor}, Microsoft \cite{ding2017collecting}, and Apple \cite{apple2017learning}.

The idea of DP is to hide the contribution of any single individual to a database: when computing queries on the database, noise is introduced into the process to make it impossible for an adversary with access to the query results to make high\hyp confidence statements about the contribution of any particular individual to the database. An algorithm that computes a noisy query result and fulfills DP is called a DP mechanism and parameterized by a tuple \((\varepsilon,\delta)\) that determines its level of privacy, where smaller parameters mean more privacy.
The original definition of DP only covers the release of a single mechanism output. In most practical scenarios, however, a database is queried more than once, and a privacy guarantee needs to be given over the composition of all DP mechanisms that are invoked on it. Note that this can even be the case when in the end only a single datum is released, because the data analysis process itself might require multiple passes over the data, e.g., when training an ML model via gradient descent. The composition of DP mechanisms is usually analyzed using the so\hyp called \emph{composition experiment} \cite{dwork2010boosting}. The \((\varepsilon,\delta)\)\hyp guarantee that can be given for the composition of \(k\) mechanisms with individual guarantees \((\varepsilon_1,\delta_1),\dots,(\varepsilon_k,\delta_k)\) not only depends on the magnitude of the \((\varepsilon_i,\delta_i)\), but also on whether the mechanisms are all the same or may differ, whether the \((\varepsilon_i,\delta_i)\) are all the same or may differ, and whether the \((\varepsilon_i,\delta_i)\) are fixed beforehand or not. Based on the restrictions on the mechanisms' DP guarantees, different composition theorems can be applied \cite{sommer2019privacy,kairouz2017composition,complexity2016,rogers2016privacy}. Tighter restrictions typically lead to better privacy guarantees.

Besides invoking \emph{mechanisms} with different properties in different iterations, invoking those mechanisms on different \emph{databases} or on different subsets of the same database has also been investigated. Examples are the sparse vector technique \cite{dwork2009complexity,svt2017}, the composition of top\hyp \(k\) queries \cite{durfee2019practical}, the sample and aggregate framework \cite{nissim2007smooth} or the sampling of mini\hyp batches in differentially private SGD \cite{abadi2016deep,mironov2019renyi}.
Analogously to restricting the set that the mechanisms can be selected from, one can also think of restricting the databases on which these mechanisms are invoked. However, while restrictions on the mechanisms have received a lot of attention from the research community, restrictions on the databases have not. In this paper, we define a generalization of the composition experiment regarding the databases on which the mechanisms are invoked. This allows us to analyze the influence of the database choices and of an adversary's prior knowledge about contributions of individuals, on the DP guarantee. This is necessary because the classic composition theorem assumes a particular relationship of the databases to each other and a particular type of adversary, and does not cover certain relevant settings, for which it gives overly loose privacy bounds.

\xhdr{Example}
Consider the following example: a government agency wants to compare the quality of the stationary care of the different hospitals in the country. For this, each hospital collects data about their stationary patients (i.e., patients that stay for at least one night) over the span of a year. If there are \(k\) hospitals in the country, there are hence \(k\) separate databases. For simplicity, assume that a single DP mechanism \(M\) that returns a score for the quality of care is invoked on each of the databases. For obtaining a privacy guarantee via the classic composition experiment, \(k\) invocations of \(M\) would have to be composed. However, there exists a natural constraint on the data contributed by each individual: there are only \(365\) nights in a year, so one individual can be a stationary patient in at most \(365\) hospitals over the course of a year. Intuitively, if \(k>365\), only \(365\) invocations of \(M\) would have to be composed. In this paper, we formalize this intuition via a generalized composition experiment, which yields tighter privacy bounds in such cases than the classic composition experiment.\footnote{Note that in this example one could apply the parallel composition theorem \cite{mcsherry2009privacy}, which also allows for reducing the number of compositions to 365. However, this theorem only allows for simple composition---besides other downsides---, whereas our generalized composition experiment does not have this restriction and is compatible with advanced composition. See \Secref{sec:parallel} for an in\hyp depth comparison with parallel composition.}

\subsection{Summary of Contributions}

Our key contributions are as follows.

\begin{itemize}
    \item After giving some background on DP composition and related work in \Secref{sec:background_related}, we analyze the semantics of the classic composition experiment and point out its shortcomings due to not considering database choices in \Secref{sec:limitations}.
    \item In \Secref{sec:generalized_comp} we then define a \emph{generalized composition experiment} (GCE), which comes without these shortcomings and allows for incorporating the database choices into the DP analysis.
    \item Next, we show that composition theorems that hold with respect to the classic composition experiment also hold with respect to the worst case of the GCE (\Secref{sec:relationship}).
    \item We analyze two application of the GCE (\Secref{sec:applications}):
    \begin{itemize}
        \item We can give better privacy guarantees when there are constraints on an individual's contribution to the different databases (\eg, an individual cannot have contributed to all databases at once; \Secref{sec:constraints}).
        \item We get a better understanding of the knowledge gain of adversaries with different amounts of prior knowledge (\Secref{sec:adversaries}). In this context, we show a connection between uncertainty in the prior knowledge of the adversary and uncertainty through subsampling of the data.
    \end{itemize}
\end{itemize}

As such, the generalized composition theorem introduced in this paper is both a tool to better understand DP composition and a tool to obtain better composition bounds.

\section{Background \& Related Work}
\label{sec:background_related}
\subsection{Background}
Pure DP was first defined in 2006 by Dwork et al. \cite{dwork2006calibrating} with only a single parameter \(\varepsilon\), and later relaxed to approximate DP, which allows for some slack \(\delta\) \cite{dwork2006our}. Since setting \(\delta\) to \(0\) turns approximate DP into pure DP, we will work with the \((\varepsilon,\delta)\) definition and refer to it simply as DP. Whenever we explicitly mean \(\delta=0\), we will write \(\varepsilon\)\hyp DP instead of \((\varepsilon,\delta)\)\hyp DP.

DP aims at hiding the contribution of each single individual to a database by ensuring that each output of a randomized algorithm that is executed on the database is similarly likely no matter whether the individual contributed to the database or not. This is formalized by requiring the output distributions resulting from executing the algorithm on so\hyp called neighboring databases to be close. If an individual can influence at most one record in the database, then natural choices for the neighborhood relationship are either to define databases as neighboring if they have the same size but differ in at most one record (bounded DP) or if one database can be obtained from the other by adding at most a single record (unbounded DP). For a simpler exposition, we will restrict ourselves to unbounded DP throughout this paper unless stated otherwise. DP mechanisms are parameterized by the two parameters \(\varepsilon\) and \(\delta\). Typical choices for \(\varepsilon\) and \(\delta\) are \(\varepsilon\leq 1\) and \(\delta\ll 1/|\x|\), where smaller values mean more privacy ($\x$ is the input database).

\begin{definition}[Differential privacy {\cite{dwork2006our}}]
\label{def:dp}
A randomized algorithm \(M\) with domain \(\mathcal{D}\) is \((\varepsilon,\delta)\)\hyp {\em differentially private} if, for all \(S\subset \Range(M)\) and for all neighboring databases \(\x,\x'\in\mathcal{D}\),
\begin{equation*}
    \Pr(M(\x)\in S)\leq e^{\varepsilon}\Pr(M(\x')\in S) + \delta.
\end{equation*}
\end{definition}

A randomized algorithm that fulfills DP is called a DP mechanism. In practice, one usually wants to not only release a single piece of information about a database, but multiple pieces, e.g., multiple summary statistics of multiple columns. Also, more complex algorithms such as the training of an ML model via gradient descent might require multiple accesses to the database. Thus, for practical purposes, it is paramount to give privacy guarantees that hold over multiple mechanism outputs. One way to extend DP in this direction is to define a sequence of DP mechanisms \(M_1,\dots,M_k\) with domains \(\mathcal{D}_1,\dots,\mathcal{D}_k\) to be differentially private if the mechanism \((M_1,\dots,M_k)\) with domain \(\mathcal{D}_1\times\dots\times\mathcal{D}_k\) and range \(\Range(M_1)\times\dots\times\Range(M_k)\) is differentially private. However, in this form of composition everything has to be fixed beforehand: neither can one choose \(M_i\) based on the outputs of \(M_1,\dots,M_{i-1}\), nor the database on which \(M_i\) is invoked.

Instead, we would like to allow for the following, more flexible way of accessing data:
There are two parties, a data analyst and a data curator. The data curator has access to private databases, while the data analyst does not have access to any private data. However, the data analyst knows which databases the data curator has access to. The data analyst gets access to private data via the data curator \(k\) times. In round \(i\) of \(k\), the data analyst chooses one of the private databases and one mechanism \(M_i\) from a class \(\mathcal{M}_i\) of differentially private mechanisms. The data curator invokes \(M_i\) on the chosen database and returns the result to the data analyst. The goal now is to bound the amount of information that the data analyst learns from the mechanism outputs about the contributions of single individuals to the databases.

\begin{algorithm}
\caption{\(\compose(\mathcal{A},(\mathcal{M}_1,\dots\mathcal{M}_k),k,b)\)}
\label{alg:comp_exp}
\begin{algorithmic}[1]
\State{\textbf{Input:} \(\mathcal{A},(\mathcal{M}_1,\dots\mathcal{M}_k),k,b\in\{0,1\}\)}
\State{\textbf{Output:} \(V^b\)}
\State{Select coin tosses \(r\) for \(\mathcal{A}\) uniformly at random}
\For{\(i=1,\dots,k\)}
    \State{\(\mathcal{A}\) returns neighboring databases \(\x_i^0,\x_i^1\), parameters \(w_i\), and a mechanism \(M_i\in\mathcal{M}_i\)}
    \State{\(\mathcal{A}\) receives \(y^b_i \sim M_i(\x_i^b,w_i)\)}
\EndFor
\Return{View \(v^b = (r, y^b_1, \dots, y^b_k)\)}
\end{algorithmic}
\end{algorithm}

To analytically reason about this setting, Dwork et al. \cite{dwork2010boosting} introduced the so\hyp called \emph{composition experiment} (\Algref{alg:comp_exp}). It is a hypothetical game and not executed like this in practice; it is merely a tool to analyze the privacy leakage over repeated accesses to private databases.
In the composition experiment, the data analyst is the adversary \(\mathcal{A}\). The adversary is allowed to, in each step $i$, pick any pair of neighboring databases \(\x_i^0,\ \x_i^1\) and a mechanism \(M_i\) out of a class of mechanisms \(\mathcal{M}_i\), together with parameters \(w_i\). This can be done adaptively based on the previous mechanism outputs. Typically, the adversary would choose a database that contains the record of a target individual that the adversary is interested in, and the neighboring database that does not contain this record. The adversary communicates the databases, the mechanism and the parameters to the data curator, who returns  \(y^b_i \sim M_i(\x_i^b,w_i)\) based on a private bit \(b\) that is only known to the data curator. The adversary wants to learn the value of \(b\) from \(y^b_1, \dots, y^b_k\), and DP for composition (Def.~\ref{def:composition}) ensures that the adversary cannot learn the value of \(b\) with high confidence. Why does this suffice to guarantee privacy in the setting described in the previous paragraph? Assume that the adversary suspects that their target contributed to all \(k\) databases. Then the adversary could choose \(\x_1^0,\dots,\x_k^0\) to be the databases that do not contain the record of the target, and \(\x_1^1,\dots,\x_k^1\) to be the same databases but with the record added. In the real world, this corresponds to an adversary that can actively influence all records in the databases, except that they do not know whether or not there is one additional record that was added by their target. If a guarantee against such a strong adversary holds, then it also holds against weaker adversaries that, for example, cannot influence the database or even only know parts of it. Note that the adversary may also choose \(\x_i^0\) to be the database that contains the target's record and \(\x_i^1\) to be the database that does not contain it, if they suspect that the target did not contribute to the \(i\)\hyp th database.
In Def.~\ref{def:composition}, the definition of DP for the composition of mechanisms, we summarize the randomness \(r\) of the adversary and the mechanism outputs in the view \(v^b = (r, y^b_1, \dots, y^b_k)\). \(r\) is required for the adversary to reconstruct their choices throughout the experiment. We denote the corresponding random variables with capital letters, i.e., \(V^b = (R, Y^b_1, \dots, Y^b_k)\).
\begin{definition}
\label{def:composition}
A sequence \(M_1,\dots,M_k\) of mechanisms is \((\varepsilon,\delta)\)\hyp {\em differentially private} if, for all sets \(S\) of views,
\begin{equation*}
    \Pr(V^0\in S) \leq e^{\varepsilon} \Pr(V^1\in S) + \delta.
\end{equation*}
and
\begin{equation*}
    \Pr(V^1\in S) \leq e^{\varepsilon} \Pr(V^0\in S) + \delta.
\end{equation*}
\end{definition}

The earliest result on composition is the simple composition theorem \cite{dwork2006our,dwork2010boosting}, which states that one can add up the \(\varepsilon\)'s and \(\delta\)'s of individual mechanisms to get a DP guarantee for their composition. For the composition of \(k\) \((\varepsilon,\delta)\)\hyp differentially private mechanisms, simple composition would result in a bound of \((k\varepsilon,k\delta)\). Later, Dwork et al.\ \cite{dwork2010boosting} showed an asymptotically better bound, the advanced composition theorem, which gives a guarantee of \((\varepsilon',k\delta + \delta')\), for any \(\delta'>0\) and for
% \begin{equation*}
    $\varepsilon' = \sqrt{2k\ln(1/\delta')}\varepsilon + k\varepsilon(e^{\varepsilon}-1)$
% \end{equation*}
in the same setting.

\subsection{Related work}
\label{sec:relwork}

\xhdr{Composition theorems}
While better than simple composition, the advanced composition theorem is not optimal.
There are different levels of flexibility that one can give the adversary, which lead to different composition bounds. In the homogeneous setting, all mechanisms \(M_i\) need to have the same DP\hyp guarantee, i.e., \(\mathcal{M}_i = \mathcal{M}_j\) for all \(i,j\) and \(\mathcal{M}_i\) contains all \((\varepsilon,\delta)\)\hyp DP mechanisms for a fixed pair \((\varepsilon,\delta)\).
% The classic ``advanced composition theorem'' is a composition theorem for this case \cite{dwork2010boosting}. However, it is not optimal.
The optimal composition theorem for homogeneous composition was proved by Kairouz et al. \cite{kairouz2017composition} by modeling differential privacy in terms of hypothesis testing. A different proof for the same theorem that uses more standard DP arguments was later given by Murtagh et al. \cite{complexity2016}.

In the heterogeneous case, each \(\mathcal{M}_i\) consists of all \((\varepsilon_i,\delta_i)\)\hyp DP mechanisms, where \((\varepsilon_i,\delta_i)\) is fixed ahead of time, but may be different for different \(i\).
% The optimal composition theorem for the homogeneous case has been generalized to the heterogeneous case \cite{kairouz2017composition}; however, this generalization is not optimal.
Murtagh et al.\ \cite{complexity2016} proved the optimal composition theorem for the heterogeneous case. Computing the DP\hyp guarantee is \(\#P\)\hyp complete and thus requires time exponential in \(k\) to compute, but the authors give a polynomial approximation algorithm.

% The heterogeneous case can be generalized even further by setting each \(\mathcal{M}_i\) to be the set of all DP mechanisms, or in other words, by letting the adversary choose the DP guarantee of the next mechanism adaptively based on previous outputs, and to additionally not fix the number of iterations \(k\) ahead of time, but to instead let the adversary also choose this number adaptively. This case has been studied by Rogers et al.\ \cite{rogers2016privacy}. Since in this case no a\hyp priori DP bound can be given, the authors introduce two objects: a privacy odometer that keeps track of the spent privacy budget and a privacy filter that can be used as a stopping rule to prevent the invocation of further mechanisms after a specified privacy spending has been reached.
% In this paper we work in the homogeneous and the heterogeneous setting, leaving out the extension of Rogers et al.

A composition theorem with a different flavor that, as our paper, is concerned with the choice of databases, is parallel composition~\cite{mcsherry2009privacy}. It states that if in all steps an \(\varepsilon\)\hyp differentially private mechanism is invoked on disjoint subsets of the same database that result from splitting the data domain into disjoint subsets, then the composition fulfills \(\varepsilon\)\hyp DP. That is, one only needs to account for a single mechanism invocation. We discuss parallel composition and how it relates to the GCE in detail in \Secref{sec:parallel}.

\xhdr{Assumptions on adversaries}
One of the things that our GCE allows to analyze are the privacy implications from modifying the adversary's prior knowledge or goals (see \Secref{sec:adversaries}).
Prior work in that direction includes membership privacy \cite{li2013membership}, which allows for modeling, e.g., settings where only the contribution or the non\hyp contribution of an individual to a database should be kept secret from the adversary, whereas DP protects both.
Noiseless DP \cite{bhaskar2011noiseless,duan2009privacy} relaxes the assumption that the adversary can influence all records in the database except from one, and instead assumes that the adversary only knows the distribution from which the database is drawn.
Desfontaines et al. \cite{desfontaines2019differential} consider adversaries with only partial knowledge about the database, and distinguish between an active adversary that can influence the database, and a passive adversary that cannot.
The very general pufferfish framework \cite{kifer2014pufferfish} that subsumes many privacy definitions allows, like our framework, for specifying constraints on the database and on the adversary's beliefs. However, it does this for the case of mechanism invocations on a single database. The authors briefly discuss the case of invoking mechanisms on different databases and the importance of taking the relationships between those databases into account, but without providing specialized composition theorems for this case.

\xhdr{Subsampling}
In \Secref{sec:adversaries} we show a connection between subsampling of database records and assumptions about the adversary's prior knowledge. Executing a mechanism only on a subsample of the database instead of the entire database improves the DP guarantee. This has been used, e.g., for stochastic gradient descent \cite{abadi2016deep}. Balle et al. \cite{balle2018privacy} showed tight bounds for subsampling. Subsampling has also been analyzed for variants of DP that are focused on composition, e.g., for R\'enyi DP \cite{mironov2019r} or for truncated concentrated DP \cite{bun2018composable}.

\section{Limitations of the Classic Composition Experiment}
\label{sec:limitations}
In this section, we give insights into what the classic composition experiment models and what it does not model. This is accompanied by examples that show how this translates to real\hyp world settings. In \Secref{sec:generalized_comp}, we then define the more flexible GCE that can additionally capture settings for which the classic composition experiment is too restrictive.

What both the homogeneous and the heterogeneous composition setting (see \Secref{sec:relwork}) have in common is that they rely on the---hypothetical---composition experiment (\Algref{alg:comp_exp}) and the corresponding definition of DP (Def.~\ref{def:composition}). In particular, at each step they let the adversary pick a pair of neighboring databases \(\x_i^0,\x_i^1\). These are the two databases between which the adversary wants to differentiate. It is assumed that there is a single bit \(b\) that determines the ``true'' database on which the mechanism is invoked, and it determines this for all iterations at the same time. This means that, depending on \(b\), either all mechanism are invoked on the first database in their iteration or all are invoked on the second database. A bound---and from it, a DP guarantee---is then derived for the ability of the adversary to tell what the value of \(b\) is based on the outputs of the mechanisms.

Let us take a step back and think about how this translates to the real world. Remember that for us neighboring databases are databases that are identical except for one record that is present in one database but not in the other. The private information to be protected is whether or not a particular record is part of the database. Assume that \((\x_i^0,\x_i^1) = (\x^0,\x^1)\) for fixed \((\x^0,\x^1)\) for all \(i\), i.e., the adversary always picks the same pair of databases, and thus the record that is present in one and missing in the other is also always the same. Then Def.~\ref{def:composition} makes perfect sense: a sequence of mechanisms that is differentially private according to that definition limits the ability of the adversary to guess whether the record in which the databases differ is part of the database \(\x^b\) on which the data curator invokes the mechanisms or not.

Consider now the setting where the database pairs are not all the same, and for simplicity assume that no two of the databases chosen by the adversary are the same, i.e., \(\x_i^0 \neq \x_j^0\), \(\x_i^0 \neq \x_j^1\) and \(\x_i^1 \neq \x_j^1\) for all \(i\neq j\).
In each iteration \(i\), the adversary targets one record \(x_i\) about whose database membership they want to learn something. As the pair of databases in iteration \(i\), they choose a database that does not contain the record and the same database with this record added. But in addition to the two databases, the adversary also has to commit to an order of this database pair. We can define corresponding hypotheses \(H_{0,i}, H_{1,i}\) for each iteration \(i\) that the adversary commits to, as follows: If \(\x_i^0\) is the database that does not contain record \(x_i\), then \(H_{0,i}\) is ``the database on which the \(i\)\hyp th mechanism is invoked does not contain \(x_i\)'' and \(H_{1,i}\) is ``the database on which the \(i\)\hyp th mechanism is invoked contains \(x_i\)''. If \(\x_i^0\) is the database that contains \(x_i\), then we swap the two hypotheses. Let \(H_0 = H_{0,1} \land \dots \land H_{0,k}\) be the hypothesis that all of \(H_{0,1},\dots,H_{0,k}\) are true, and \(H_1 = H_{1,1} \land \dots \land H_{1,k}\) be the hypothesis that all of \(H_{1,1},\dots,H_{1,k}\) are true. Due to the way the composition theorem is set up (there is only a single bit \(b\)), either \(H_0\) is correct or \(H_1\) is correct. If the sequence of mechanisms fulfills Def.~\ref{def:composition}, we have a bound on the adversary's ability to determine which of \(H_0\) and \(H_1\) is correct. This raises the questions: In practice, is it realistic to assume that the adversary can reduce the set of possible hypotheses to just these two? And is it even always the case that either \(H_0\) or \(H_1\) is true? Let \(B=\{0,1\}^k\) be the set of all vectors of length \(k\) with binary entries. For any \(\bb\in B\), let
\begin{equation*}
    H_{\bb} = \bigwedge\limits_{i=1}^k H_{b_i,i}.
\end{equation*}
Could it not be the case that \(H_{\bb}\) for \(\bb\notin\{\vec{0},\vec{1}\}\) is true (\(\vec{0}\) and \(\vec{1}\) denote the 0- and the 1\hyp vector here, respectively)?

\subsection{Examples}
\xhdr{Example 1}
There are some cases where it makes sense to assume that only \(H_0\) or \(H_1\) can be true, even if the queried database is not the same in each iteration. Assume that the individual about whose contribution to databases the adversary wants to learn something is the same in each iteration (though the specific record contributed by this individual may be different in different databases). Say the databases \(\x_i^0,\ \x_i^1\) contain usage statistics of individual users of software applications. If the databases all correspond to one of the different applications contained in an office suite that is sold as a package, then it is reasonable to assume that if a user's data is part of one of the databases, then it is part of all databases, because customers only have the option to either buy all applications bundled in the office suite or none of them. However, if the databases correspond to software applications that are sold separately, then things are different. Without additional prior information, it is no longer reasonable to assume that the targeted individual either uses/owns all applications or none of them. It might instead be much more likely that the individual uses either none or a strict subset of them, but not all. What can, e.g., be said for sure is that the individual either uses no application or at least one application. And knowing whether an individual uses at least one application could indeed be valuable information for an adversary. The adversary might, e.g., know about a vulnerability in a code library used by all applications of this software vendor, and wants to know whether the individual could be attacked this way, because they use one of the vulnerable applications. Another example are police databases from different states and an adversary that wants to know whether the target individual committed a crime in at least one state. Assume that the adversary in each iteration \(i\) picks \(\x^0_i\) to be the database not containing the individual's data, and \(\x^1_i\) to be the database containing the individual's data, and that the adversary only has an uninformative prior about database membership of the target. Then this would correspond to testing hypothesis \(H_0\) vs.\ \(U(\{H_{\bb} \mid \bb\in B\setminus\{\vec{0}\}\})\), i.e., the uniform distribution over all hypotheses except \(H_0\). But this cannot be captured by a single run of \Algref{alg:comp_exp}, because \Algref{alg:comp_exp} only allows for testing non\hyp composite hypotheses: in Def.~\ref{def:composition}, the hypotheses that are compared are \(b=0\) versus \(b=1\).

\xhdr{Example 2}
There are also settings where it is of interest whether the individual contributed to at least a certain number \(k'\) of the databases. If the databases contain the data of one medical practice each, an employer could try to find out whether a job applicant has gone to a large number of different medical doctors in the past, which might indicate severe health problems and therefore many sick days. This would correspond to hypothesis \(U(\{H_{\bb} \mid \norm{\bb}_1<k'\})\) vs.\ \(U(\{H_{\bb} \mid \norm{\bb}_1\geq k'\})\). We hence need a more general composition guarantee that can capture more hypotheses than just \(H_0\) vs.\ \(H_1\).

To give a different angle at the problem: A main idea behind the development of differential privacy was to give a privacy definition that assumes the least possible about the adversary's knowledge and about what the adversary wants to learn. DP guarantees even hold when the adversary knows all records that are in the database except one, and no matter what the adversary wants to learn about the target record. But when it comes to composition, the current model via the composition experiment makes the assumption that the adversary only wants to compare the hypotheses with associated \(\bb\)\hyp vectors \(\vec{0}\) and \(\vec{1}\). This leads us to the GCE in \Secref{sec:generalized_comp}, which does not have this restriction. As we will prove later, the assumption of the classic composition theorem that the adversary only wants to test non\hyp composite hypotheses does not undermine privacy. This is one of the contributions of this paper. Apart from that, our GCE has multiple applications --- e.g., for proving better privacy bounds in certain situations ---, two of which we analyze in \Secref{sec:applications}.

\xhdr{Example 3}
We can use the GCE to prove a better privacy bound for the example with the hospitals from the introduction. We can model the constraint that the target could have contributed to at most \(365\) databases by restricting the set of possible hypotheses to those that are composed of at most \(365\) \(H_{1,i}\) hypotheses and otherwise \(H_{0,i}\) hypotheses, i.e., those \(H_{\bb}\) with \(\norm{\bb}_1\leq 365\).

\section{A Generalized Composition Experiment}
\label{sec:generalized_comp}
In order to capture the entire range of possible hypotheses, we need to generalize the composition experiment, which we do in \Algref{alg:comp_exp_gen}. Instead of a single bit \(b\), the input now contains a vector of bits \(\bb\), where the \(i\)\hyp th entry \(b_i\) determines which of the two databases the mechanism in iteration \(i\) is invoked on. The adversary tries to guess \(\bb\) from the mechanism outputs. The original composition experiment would correspond to restricting \(\bb\) to be the 0- or the 1\hyp vector. Hypotheses are given as distributions over \(\bb\) vectors, assigning to each vector a belief in the form of a probability. To simplify notation throughout the paper, we will often assume that database 0 has at most as many records as database 1: \(|\x_i^0|\leq |\x_i^1|\). We can do this without loss of generality, for if this inequality does not hold, we can simply flip the corresponding bit \(b_i\).

\begin{algorithm}[H]
\caption{\(\composegeneralized(\mathcal{A},(\mathcal{M}_1,\dots\mathcal{M}_k),k,\bb)\)}
\label{alg:comp_exp_gen}
\begin{algorithmic}[1]
\State{\textbf{Input:} \(\mathcal{A},(\mathcal{M}_1,\dots\mathcal{M}_k),k,\bb\in B=\{0,1\}^k\)}
\State{\textbf{Output:} \(V^{\bb}\)}
\State{Select coin tosses \(r\) for \(\mathcal{A}\) uniformly at random}
\For{\(i=1,\dots,k\)}
    \State{\(\mathcal{A}\) returns neighboring databases \(\x_i^0,\x_i^1\), parameters \(w_i\) and a mechanism \(M_i\in\mathcal{M}_i\)}
    \State{\(\mathcal{A}\) receives \(y^{b_i}_i \sim M_i(\x_i^{b_i},w_i)\)}
\EndFor
\Return{View \(v^{\bb} = (r, y^{b_1}_1, \dots, y^{b_k}_k)\)}
\end{algorithmic}
\end{algorithm}

In Def.~\ref{def:hdp}, we define privacy with respect to the GCE (\Algref{alg:comp_exp_gen}), in a way that generalizes the classic composition experiment (\Algref{alg:comp_exp}) and the corresponding privacy definition (Def.~\ref{def:composition}). Def.~\ref{def:hdp} bounds how much the likelihood of the mechanism outputs given one hypothesis \(p_0\) regarding \(\bb\), and the likelihood given any other hypothesis \(p_1\) regarding \(\bb\) may differ.

\begin{definition}[Hypothesis differential privacy]
\label{def:hdp}
A sequence of mechanisms in \Algref{alg:comp_exp_gen} is \emph{\((\varepsilon,\delta)\)\hyp hypothesis differentially private} with respect to a set \(\mathcal{P}\) of pairs of distributions over \(B\) if for all pairs of distributions \((p_0,p_1)\in\mathcal{P}\) and all sets of views \(S\),
\begin{equation}
    \mathbb{E}_{\bb\sim p_0}[\Pr(V^{\bb}\in S)] \leq e^{\varepsilon} \mathbb{E}_{\bb\sim p_1}[\Pr(V^{\bb}\in S)] + \delta\label{eq:hdp0}
\end{equation}
and
\begin{equation}
    \mathbb{E}_{\bb\sim p_1}[\Pr(V^{\bb}\in S)] \leq e^{\varepsilon} \mathbb{E}_{\bb\sim p_0}[\Pr(V^{\bb}\in S)] + \delta.\label{eq:hdp1}
\end{equation}
We say the sequence is \((\varepsilon,\delta)\)\hyp hypothesis differentially private with respect to a pair of distributions \((p_0,p_1)\) if it is \((\varepsilon,\delta)\)\hyp hypothesis differentially private with respect to the one\hyp element set \(\{(p_0,p_1)\}\).
\end{definition}

This is a guarantee against an adversary that has two (potentially composite) hypotheses about the databases in the different iterations, and wants to learn from the mechanism outputs which one is closer to the true \(\bb\). The definition allows for restricting the set of hypotheses that the adversary may have via the set \(\mathcal{P}\). This restriction can, e.g., result from:
\begin{enumerate}
    \item \textbf{External constraints on the databases.} In the example from the introduction, an individual can contribute to at most \(365\) of the databases. This implies---assuming \(|\x_i^0|\leq |\x_i^1|\)---\(p_0(\bb) = p_1(\bb) = 0\) for all \(\bb\) with \(\norm{\bb}_1>365\) for all \((p_0,p_1)\in \mathcal{P}\). In \Secref{sec:constraints} we analyze this and other examples in more detail.
    \item \textbf{Assumptions about the adversary.} There is a line of work analyzing how different assumptions about the adversary affect the privacy guarantee (see \Secref{sec:relwork}). With the generalized composition theorem one is able to analyze how different adversarial hypotheses about database membership affect the privacy guarantee, which we do in \Secref{sec:adversaries}.
\end{enumerate}

The classic composition experiment (\Algref{alg:comp_exp}) and privacy definition (Def.~\ref{def:composition}) are a special case of our generalized experiment and definition. They can be recovered by setting \(\mathcal{P}\) to be the set that contains only the pair of distributions \([p_0(\vec{0}) = 1,\ p_0(\bb) = 0\text{ for all }\bb\neq \vec{0}]\) and \([p_1(\vec{1}) = 1,\ p_1(\bb) = 0\text{ for all }\bb\neq \vec{1}]\).

Note that Def.~\ref{def:hdp} can be equivalently formulated as bounding the Bayes factor \cite{goodman1999toward} between any pair of hypotheses from \(\mathcal{P}\) by \(e^{\varepsilon}\), with an additional additive constant \(\delta\).

\subsection{Relating Classic and Generalized Composition}
\label{sec:relationship}
We show that any sequence of mechanisms fulfills \((\varepsilon,\delta)\)\hyp DP (Def.~\ref{def:composition}) if and only if it fulfills generalized \((\varepsilon,\delta)\)\hyp HDP (Def.~\ref{def:hdp}) with respect to the set of all possible pairs of hypotheses. Here we give a sketch of the proof; the full proof can be found in \Appref{app:proof_equivalence}.

\begin{theorem}\label{thm:equivalence}
Let \(\mathcal{D}(B)\) denote the set of all probability distributions over the set \(B\). A sequence of mechanisms is \((\varepsilon,\delta)\)\hyp differentially private if and only if it is \((\varepsilon,\delta)\)\hyp hypothesis differentially private with respect to the set \(\mathcal{P} = \mathcal{D}(B)\times \mathcal{D}(B)\).
\end{theorem}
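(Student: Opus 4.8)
The statement is an equivalence, and only the ``only if'' direction carries real content. For the ``if'' direction I would note that $\mathcal{D}(B)\times\mathcal{D}(B)$ contains the pair $(p_0,p_1)$ in which $p_0$ is the point mass on $\vec{0}$ and $p_1$ the point mass on $\vec{1}$. Running the GCE (\Algref{alg:comp_exp_gen}) with the constant bit vector $\vec{0}$ (and, symmetrically, with $\vec{1}$) produces exactly the view distribution $V^0$ (resp.\ $V^1$) of the classic experiment (\Algref{alg:comp_exp}) run with bit $0$ (resp.\ $1$), so for this particular pair the two inequalities of Def.~\ref{def:hdp} reduce verbatim to the two inequalities of Def.~\ref{def:composition}. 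Hence HDP with respect to $\mathcal{D}(B)\times\mathcal{D}(B)$ implies classic DP --- this is essentially the remark already recorded right after Def.~\ref{def:hdp}.

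The ``only if'' direction is where the work is. Fix an adversary $\mathcal{A}$, a pair $(p_0,p_1)\in\mathcal{D}(B)\times\mathcal{D}(B)$ and a set $S$ of views, and write $V^{\bb}$ for the view of $\mathcal{A}$ when the GCE is run with bit vector $\bb$. My plan is first to establish the much stronger pointwise bound: for \emph{every} ordered pair $\bb,\bb'\in B$,
\begin{equation*}
    \Pr(V^{\bb}\in S)\le e^{\varepsilon}\,\Pr(V^{\bb'}\in S)+\delta .
\end{equation*}
To prove this for fixed $\bb,\bb'$, I would build a \emph{relabeling adversary} $\mathcal{A}'$ (depending on $\bb,\bb'$) for the classic experiment: $\mathcal{A}'$ keeps a private copy of $\mathcal{A}$, fed the shared coins $r$; in iteration $i$ it reads off $\mathcal{A}$'s proposed $(\x_i^0,\x_i^1,w_i,M_i)$ and instead submits the pair $(\x_i^{b_i},\x_i^{b'_i})$ with the same $w_i$ and $M_i$ --- a neighboring pair, equal to the degenerate $(\x_i^{b_i},\x_i^{b_i})$ when $b_i=b'_i$, which still differs in at most one record --- and forwards the answer it receives back to its copy of $\mathcal{A}$. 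An induction on $i$ shows that the classic experiment run with $\mathcal{A}'$ and global bit $0$ reproduces the view distribution of the GCE with bit vector $\bb$, while global bit $1$ reproduces that of the GCE with bit vector $\bb'$. Applying Def.~\ref{def:composition} to $\mathcal{A}'$ then gives the displayed inequality, and its reverse follows from the second inequality of Def.~\ref{def:composition}.

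With the pointwise bound in hand, the composite hypotheses are handled by the product coupling of $p_0$ and $p_1$: inserting a dummy sum over $\bb'$ using $\sum_{\bb'}p_1(\bb')=1$, applying the pointwise bound, and collapsing the sum over $\bb$ again,
\begin{align*}
    \mathbb{E}_{\bb\sim p_0}[\Pr(V^{\bb}\in S)]
    &=\sum_{\bb,\bb'}p_0(\bb)\,p_1(\bb')\,\Pr(V^{\bb}\in S)\\
    &\le\sum_{\bb,\bb'}p_0(\bb)\,p_1(\bb')\bigl(e^{\varepsilon}\,\Pr(V^{\bb'}\in S)+\delta\bigr)\\
    &=e^{\varepsilon}\,\mathbb{E}_{\bb\sim p_1}[\Pr(V^{\bb}\in S)]+\delta ,
\end{align*}
which is \eqnref{eq:hdp0}; \eqnref{eq:hdp1} follows by the symmetric computation. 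Since $\mathcal{A}$, $(p_0,p_1)$ and $S$ were arbitrary, the sequence is HDP with respect to $\mathcal{P}=\mathcal{D}(B)\times\mathcal{D}(B)$.

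The step I expect to be the main obstacle is making the simulation claim for $\mathcal{A}'$ fully rigorous, precisely because $\mathcal{A}$ is adaptive whereas $\mathcal{A}'$ must commit to a single global bit. What makes it go through is that a view reveals only the coins $r$ and the mechanism \emph{outputs} $y_i$, never the databases: under global bit $0$ the outputs $\mathcal{A}'$ receives are distributed exactly as in the $\bb$-run of the GCE, so $\mathcal{A}'$'s internal copy of $\mathcal{A}$ stays in lock-step with that run and emits the $\bb$-run's database pair, of which $\mathcal{A}'$ forwards the $b_i$-th component; symmetrically, under global bit $1$ the same bookkeeping tracks the $\bb'$-run and $\mathcal{A}'$ forwards the $b'_i$-th component. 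Writing out this induction carefully --- including the degenerate iterations where $b_i=b'_i$ --- is where essentially all the content of the theorem lies; everything downstream of it is the elementary convexity computation above.
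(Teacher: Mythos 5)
Your proof is correct, but it reaches the ``only if'' direction by a genuinely different route than the paper. For the deterministic (point-mass) case, the paper fixes the pair \(\bb_0,\bb_1\), observes that one of the two inequalities is immediate once one assumes w.l.o.g.\ \(\Pr(V^{\bb_0}\in S)\leq\Pr(V^{\bb_1}\in S)\), and then obtains the reverse inequality by iteratively flipping bits at the indices where \(\bb_0\) and \(\bb_1\) agree --- arguing via a case analysis (and independence of the mechanisms' internal randomness) that each flip moves the two probabilities monotonically apart until a complementary pair \(\cc_0,\cc_1=\cc_0'\) is reached, to which Def.~\ref{def:composition} applies. Your relabeling adversary \(\mathcal{A}'\) instead embeds an \emph{arbitrary} ordered pair \((\bb,\bb')\) directly into the classic experiment by submitting \((\x_i^{b_i},\x_i^{b'_i})\) in round \(i\), so that global bit \(0\) (resp.\ \(1\)) reproduces the \(\bb\)-run (resp.\ \(\bb'\)-run) of the GCE; this is a clean reduction that sidesteps the monotonicity/independence claims of the flipping argument, which are the most delicate part of the paper's proof under adaptivity (and note the paper itself implicitly performs such a relabeling when it applies Def.~\ref{def:composition} to the non-complementary ordering \((\cc_0,\cc_0')\)). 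For the composite case, the paper couples \(p_0\) and \(p_1\) via the equal-weight matching produced by \(\refinetuples\) (\Algref{alg:refine}) and an induction over its while loop, whereas you use the product coupling \(p_0\times p_1\); since your pointwise bound is uniform over all ordered pairs, any coupling works and the product coupling collapses the step to a two-line convexity computation. The trade-off is that the refinement machinery is not redundant in the paper: it is reused in \Secref{sec:adversaries}, where the per-pair guarantees \((\varepsilon(\bb_0^i,\bb_1^i),\delta(\bb_0^i,\bb_1^i))\) differ across pairs and the choice of coupling affects tightness, so the paper's route sets up tools for the applications while yours is the more economical proof of the equivalence itself. The one point you should make explicit when writing out the simulation is the one you already flag: \((\x_i^{b_i},\x_i^{b'_i})\) must be an admissible ``neighboring pair'' even when \(b_i=b'_i\), which holds because a database differs from itself in zero (hence at most one) records.
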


\begin{proof}[Proof sketch]
It is easy to see that DP is a special case of HDP, where the two hypotheses are given by the 0- and the 1\hyp vector, which shows one direction of the theorem. For showing the other direction, we first only consider non\hyp composite hypotheses and progressively flip bits in the corresponding \(\bb\)\hyp vectors until we have two vectors to which we can apply the classic composition experiment. We then extend this analysis to tuples \((p_0,p_1)\) of composite hypotheses: \(p_0\) and \(p_1\) can be expressed in terms of the sets of tuples \(T_0 = \{(\bb_0, p_0(\bb_0))\mid \bb_0\in\supp(p_0)\}\) and \(T_1 = \{(\bb_1, p_1(\bb_1))\mid \bb_1\in\supp(p_1)\}\). In order to be able to apply the previously obtained inequalities, we need to be able to match tuples \((\bb_0,w_0)\in T_0\) and \((\bb_1,w_1)\in T_1\), where \(w_0=w_1\). However, this will not always be possible, since the distributions over probability masses need not be the same for \(p_0\) and \(p_1\). We can solve this problem by refining tuples via \Algref{alg:refine} from \Appref{app:refinement}. It takes as input the sets \(T_0\) and \(T_1\), and in each step refines a tuple \((\bb,w)\) from one of the sets into two tuples \((\bb,w^1),\ (\bb,w^2)\) with \(w = w^1+w^2\). It returns two sets \(U_0,\ U_1\) of such refined tuples for which a matching can be found.
\end{proof}

Let us summarize what we have done so far. We have explained why the classic privacy definition for the composition of DP mechanisms does not (explicitly) cover many cases of practical relevance. We have then generalized the definition to a privacy definition that covers these cases. Finally, we have shown that fortunately sequences of mechanisms that fulfill the original definition also fulfill the generalized definition in all cases. Hence, sequences of mechanisms that fulfill DP protect against a wider range of adversaries than what is explicitly modeled in the composition experiment. We now proceed by using our generalized definition to derive better privacy bounds in certain settings.

\section{Applications}
\label{sec:applications}
We analyze two applications of the generalized composition experiment.

\subsection{Better Bounds from Database Membership Constraints}
\label{sec:constraints}
\xhdr{Motivation}
In the example from the introduction, there were natural constraints on the contribution of an individual to the database: there are only \(365\) nights in a year and thus an individual can contribute to at most \(365\) hospital databases. Such restrictions on database contribution consequently restrict the set of pairs of hypotheses \(\mathcal{P}\) with respect to which an HDP guarantee needs to be given, which can result in better privacy bounds.
In this section, we analyze three examples.
We first need some notation.
\begin{definition}
A composition theorem is {\em compatible} with a sequence \(M_1,\dots,M_k\) of DP mechanisms if it is applicable to the sequence \(M_1,\dots,M_k\) in the context of the classic composition experiment.
\end{definition}

For example, if \(M_1,\dots,M_k\) all have the same DP guarantee, then the composition theorem by Kairouz et al.\ \cite{kairouz2017composition} is compatible with \(M_1,\dots,M_k\). If the DP guarantees differ, then the composition theorem by Kairouz et al.\ is not compatible anymore, but the composition theorem by Murtagh et al.\ \cite{complexity2016} is.

In the following we will use \(\compguarantee(M_1,\dots,M_k)\) to denote any DP guarantee for the composition of \(M_1,\dots,M_k\) (in the sense of Def.~\ref{def:composition}). Such a guarantee can be obtained by applying a composition theorem that is compatible with \(M_1,\dots,M_k\) to \(M_1,\dots,M_k\).
We will sometimes take a maximum over different compositions, e.g., \(\max_{i,j} \compguarantee(M_i,M_j)\). Since \(\compguarantee\) returns a tuple, we mean by this fixing a \(\delta\) and then maximizing with respect to \(\varepsilon\), similar to Murtagh et al. \cite{complexity2016}.
In all examples in this subsection the DP guarantee via the classic composition theorem is \(\compguarantee(M_1,\dots,M_k)\) in both the bounded DP and the unbounded DP case.

Throughout the examples we will assume \(|\x_i^0| < |\x_i^1|\) for all \(i\), i.e., database 0 does not contain the target record and database 1 does contain the target record.

\xhdr{Example 1}
In the introduction (\Secref{sec:introduction}) we had the example of \(k\) hospital databases that each get queried once. Let \(M_i\) be the mechanism invoked in iteration \(i\). An individual can contribute to at most \(365\) databases. Therefore, \(\bb\in\tilde{B} = \{\cc\in\{0,1\}^k\mid \sum_{i=i}^k c_i \leq 365\}\). We thus only need to give a guarantee against adversaries with hypotheses that are distributions over \(\tilde{B}\). In the unbounded DP case the adversary wants to determine whether an individual contributed data or not, and hence one of the hypotheses (the one that the individual did not contribute data) will have all of its probability mass on the zero\hyp vector. Let \(p_0^0\) denote this probability distribution, i.e., \(p_0^0(\vec{0}) = 1\), and let \(\mathcal{D}(\tilde{B})\) be the set of all probability distribution over \(\tilde{B}\) as before. Then \(\mathcal{P} = \{p_0^0\}\times \mathcal{D}(\tilde{B})\). As we have seen in \Secref{sec:relationship}, the worst case occurs for non\hyp composite hypotheses. Since \(\bb\) can have at most \(365\) one\hyp entries, we thus get an HDP guarantee of
\begin{equation*}
    \max_{i_1<\dots,<i_{365}} \compguarantee(M_{i_1},\dots,M_{i_{365}}).
\end{equation*}
For bounded DP, i.e., the case where the adversary knows that the target individual has contributed to databases but not to which ones, we do not have the restriction \(p_0(\vec{0})=1\). Thus, \(\mathcal{P} = \mathcal{D}(\tilde{B})\times \mathcal{D}(\tilde{B})\) and the HDP guarantee is
\begin{equation*}
    \max_{i_1<\dots,<i_{730}} \compguarantee(M_{i_1},\dots,M_{i_{730}}),
\end{equation*}
since in the worst case the adversary compares two vectors with \(365\) one\hyp entries each at different positions.

\xhdr{Example 2}
A company has \(k\) different subsidiaries, and for each subsidiary \(i\) a database \(\x_i\) with data from the employees that work in this subsidiary. Since each employee can only work in one of the subsidiaries, an individual's record will either be present in none of the databases or in exactly one database. If each database is queried once, then the set of possible \(\bb\)\hyp vectors is restricted to \(\{\cc\in\{0,1\}^k\mid \norm{\cc}_1 \leq 1\}\). In the unbounded DP case the adversary could thus at worst compare a vector with a single one\hyp entry with a vector with only zeroes, and in the bounded DP case two vectors with a single 1\hyp entry each. This results in an HDP guarantee of
\begin{equation*}
  \max_{i} \compguarantee(M_i)  
\end{equation*}
in the unbounded case (where the composition theorem here just return the DP guarantee of \(M_i\)) and an HDP guarantee of
\begin{equation*}
    \max_{i\neq j} \compguarantee(M_i,M_j)
\end{equation*}
in the bounded case.

\xhdr{Example 3}
We can also treat columns of a database as separate databases. Instead of having restrictions on the databases to which an individual could have contributed, we can then have restrictions on which fields of a column can be non\hyp null. For this example, assume that the database contains usage statistics for a software. The rows correspond to users, the columns to the different features of the software, and the cells contain a number that indicates how often the feature was used by the user. There are two versions of the software: a free version with ads and less functionality, and a paid version without ads and more functionality. There is a set of common features with indices \(1,\dots,k_1\), corresponding to functionality that is present in both versions of the software. Furthermore, there are features \(k_1+1,\dots,k_2\) related to ads (e.g., how often different types of ads have been clicked), and the corresponding database entries are only non\hyp null for users of the free version. Then there are features \(k_2+1,\dots,k_3\) corresponding to functionality that only users of the paid version have access to. Thus, for a given record either the cells corresponding to features \(k_1+1,\dots,k_2\) will be null or the cells corresponding to features \(k_2+1,\dots,k_3\). Assume that in each iteration \(i\) the adversary invokes mechanism \(M_i\) that accesses only the \(i\)\hyp th  column, and ignores null entries in this column (a simple example would be computing a noisy mean). Hence records of free users do not influence the results of queries related to premium functionality, and records of paid users users do not influence the results of queries related to ads. This means that \(\bb\in\tilde{B} = \{\cc^1,\cc^2,\cc^3\}\), where \(\cc^1 = (1,\dots,1,1,\dots,1,0\dots,0)\) (\(k_3-k_2\) zeroes), \(\cc^2 = (1,\dots,1,0\dots,0,1,\dots,1)\) (\(k_2-k_1\) zeroes) and \(\cc^3 = \vec{0}\). Thus, the HDP guarantee in the unbounded case is given as
\begin{align*}
    \max\{&\compguarantee(M_{k_1},\dots,M_{k_3}),\\
    &\compguarantee(M_{1},\dots,M_{k_2})\}
\end{align*}
(\(\cc^1\) vs.\ \(\cc^2\), \(\cc^1\) vs.\ \(\cc^3\), respectively), and in the bounded case as
\begin{align*}
    \max\{&\compguarantee(M_{k_1},\dots,M_{k_3}),\\
    &\compguarantee(M_{1},\dots,M_{k_2}),\\
    &\compguarantee(M_{1},\dots,M_{k_1},M_{k_2},\dots,M_{k_3})\}
\end{align*}
(\(\cc^1\) vs.\ \(\cc^2\), \(\cc^1\) vs.\ \(\cc^3\), \(\cc^2\) vs.\ \(\cc^3\), respectively).

Note that all of these examples work analogously if the adversary queries each database more than once or invokes more than one mechanism per database column.
We want to highlight that the privacy guarantees differ based on whether one works with the neighborhood definition of bounded or unbounded DP.

\xhdr{General case}
Whenever there is a restriction that implies that the maximal number of databases (Examples 1 \& 2) or subsets of the same database (Example 3) that the DP mechanisms are invoked on that an individual can contribute to is smaller than the total number of databases or subsets of the same database, one can improve the privacy bound by using the GCE. In all three examples such restrictions result from the nature of the databases, and are public information: the number of nights in a year (Example 1), the fact that one employee works at exactly one subsidiary (Example 2) and the fact that the free and the paid version of a software have different features (Example 3).
But even in the absence of such publicly known restrictions one can use the GCE for better privacy bounds. In that case, one can treat the restriction as non\hyp private information and estimate it from the data. Take the example of a store chain that wants to compare the performance of its branches by evaluating their sales databases. One might treat the information which products a customer buys as private information, but not the information from how many different branches they buy those products. Then the store chain can compute the maximal number of branches that a single customer has made purchases from, and use this as the upper bound on the number of databases that an individual could have contributed to.
Alternatively, one can also make (reasonable) assumptions that lead to such an upper bound. If instead of shops the branches are restaurants and the data spans one year, then one could make the assumption that one person does not eat at more than three restaurants per day, and could thus only have eaten at at most \(3*365\) different restaurants over the year of the data collection.

\subsubsection{Comparison with parallel composition}
\label{sec:parallel}
In all three examples above, one can also use the parallel composition theorem due to McSherry \cite{mcsherry2009privacy} to improve the privacy guarantee over the naive guarantee via the classic composition experiment. However, the resulting guarantee is at most as good as the one obtained via the GCE, and typically worse. McSherry uses a slightly different definition of DP than the standard one (Def.~\ref{def:dp}). If we were only interested in unbounded DP, we could also use the definition of group privacy \cite{dwork2008theory}, but to also include bounded DP, the definition of McSherry is required:
\begin{definition}[Differential privacy (McSherry) {\cite{mcsherry2009privacy}}]
\label{def:dp_mcsherry}
    A randomized algorithm \(M\) with input domain \(\mathcal{P}(\mathcal{D})\), where \(\mathcal{P}(\mathcal{D})\) is the set of all multisets over some data domain \(\mathcal{D}\), is \(\varepsilon\)\hyp differentially private if, for all \(S\subset\Range(M)\) and for all databases \(\x,\xp\in\mathcal{P}(\mathcal{D})\),
    \begin{equation*}
        \Pr(M(\x)\in S) \leq \exp(\varepsilon \abs{\x\oplus\xp}) \Pr(M(\xp)\in S),
    \end{equation*}
    where \(\oplus\) denotes symmetric difference.
\end{definition}

Parallel composition now states that if we split the database into disjoint parts using a partition of the data domain, then we can apply a DP mechanism to each of the resulting databases, but only pay for one mechanism invocation. In the theorem, the DP definition of McSherry is used.

\begin{theorem}[Parallel composition {\cite{mcsherry2009privacy}}]
    For \(i=1,\dots,k\), let \(M_i\) provide \(\varepsilon\)\hyp DP. Let \(\mathcal{D}_i\), \(i=1,\dots,k\), be disjoint subsets of the data domain \(\mathcal{D}\). The sequence \(M_1(\x\cap \mathcal{D}_1),\dots,M_k(\x\cap \mathcal{D}_2)\) provides \(\varepsilon\)\hyp DP.
\end{theorem}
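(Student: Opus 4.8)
The plan is to exploit the fact that the composed mechanism $M=(M_1(\x\cap\mathcal{D}_1),\dots,M_k(\x\cap\mathcal{D}_k))$ is a vector of $k$ \emph{independent} randomized computations, each acting on a \emph{disjoint} slice of the input, and to combine this independence with the additivity of symmetric difference over the sets $\mathcal{D}_1,\dots,\mathcal{D}_k$. Since the McSherry definition (Def.~\ref{def:dp_mcsherry}) is stated for \emph{all} pairs of databases, with a factor that scales with $\abs{\x\oplus\xp}$, it suffices to prove the scaled inequality for all $\x,\xp\in\mathcal{P}(\mathcal{D})$ and then specialize to $\abs{\x\oplus\xp}=1$.

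First I would fix arbitrary $\x,\xp\in\mathcal{P}(\mathcal{D})$ and record the combinatorial identity behind the whole argument: because the $\mathcal{D}_i$ are pairwise disjoint, the sets $(\x\cap\mathcal{D}_i)\oplus(\xp\cap\mathcal{D}_i)$ are pairwise disjoint subsets of $\x\oplus\xp$, so
\[
  \sum_{i=1}^k \abs{(\x\cap\mathcal{D}_i)\oplus(\xp\cap\mathcal{D}_i)} \;\le\; \abs{\x\oplus\xp}
\]
(an equality if $\bigcup_i\mathcal{D}_i=\mathcal{D}$; in general the inequality is all we need, and it points the right way). Next I would establish the output inequality first for \emph{product} events $S=S_1\times\dots\times S_k$ with $S_i\subseteq\Range(M_i)$. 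By independence of the coin tosses of $M_1,\dots,M_k$, $\Pr(M(\x)\in S)=\prod_{i=1}^k\Pr(M_i(\x\cap\mathcal{D}_i)\in S_i)$. Applying Def.~\ref{def:dp_mcsherry} to each factor with the pair $(\x\cap\mathcal{D}_i,\xp\cap\mathcal{D}_i)$, multiplying the $k$ resulting inequalities, and bounding the sum of exponents via the displayed inequality yields
\[
  \Pr(M(\x)\in S)\;\le\;\exp\!\big(\varepsilon\,\abs{\x\oplus\xp}\big)\,\Pr(M(\xp)\in S).
\]

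Finally I would extend from product events to an arbitrary measurable $S\subseteq\Range(M_1)\times\dots\times\Range(M_k)$. Both $A\mapsto\Pr(M(\x)\in A)$ and $A\mapsto\Pr(M(\xp)\in A)$ are product measures, and the product events form a $\pi$-system generating the product $\sigma$-algebra; with $c\defeq\exp(\varepsilon\abs{\x\oplus\xp})$ held fixed, the collection of $A$ satisfying $\Pr(M(\x)\in A)\le c\,\Pr(M(\xp)\in A)$ is closed under the operations needed for a standard monotone-class / Dynkin argument, which upgrades the bound to all measurable $S$. (Equivalently, one can argue at the level of densities: the Radon--Nikodym derivative of $\Pr(M_i(\x\cap\mathcal{D}_i)\in\cdot)$ with respect to $\Pr(M_i(\xp\cap\mathcal{D}_i)\in\cdot)$ is bounded a.e.\ by $\exp(\varepsilon\abs{(\x\cap\mathcal{D}_i)\oplus(\xp\cap\mathcal{D}_i)})$, so the product density is bounded by $c$, and integrating over $S$ closes the argument.) Specializing to $\x,\xp$ differing in one record gives $\abs{\x\oplus\xp}=1$ and hence $\varepsilon$-DP of the sequence.

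The only step that requires genuine care — the "main obstacle," such as it is — is the extension from product events to general measurable output sets; the independence factorization, the coordinate-wise application of Def.~\ref{def:dp_mcsherry}, and the additivity of symmetric difference over disjoint pieces are essentially bookkeeping. If the ranges $\Range(M_i)$ are finite or countable, even that subtlety evaporates and the proof collapses to a two-line computation.
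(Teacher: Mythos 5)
The paper does not prove this theorem; it is imported verbatim from McSherry \cite{mcsherry2009privacy}, so there is no in-paper proof to compare against. Judged on its own, your argument is the standard proof of parallel composition and is correct in substance. The combinatorial core is right and worth stating even more sharply: intersection distributes over symmetric difference, so \((\x\cap\mathcal{D}_i)\oplus(\xp\cap\mathcal{D}_i)=(\x\oplus\xp)\cap\mathcal{D}_i\), and disjointness of the \(\mathcal{D}_i\) makes these disjoint pieces of \(\x\oplus\xp\), giving \(\sum_i\abs{(\x\cap\mathcal{D}_i)\oplus(\xp\cap\mathcal{D}_i)}\leq\abs{\x\oplus\xp}\). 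Multiplying the per-coordinate instances of Def.~\ref{def:dp_mcsherry} and summing exponents then gives the claim for product events, and specializing to \(\abs{\x\oplus\xp}=1\) gives \(\varepsilon\)-DP. Note that the factorization \(\Pr(M(\x)\in S)=\prod_i\Pr(M_i(\x\cap\mathcal{D}_i)\in S_i)\) uses that the \(M_i\) are fixed in advance; that matches the non-adaptive statement here, but an adaptive version would require conditioning on the output prefix instead of independence.

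One step is misstated, though it does not sink the proof. The collection \(\{A:\mu(A)\leq c\,\nu(A)\}\) is \emph{not} a Dynkin (\(\lambda\)-) system: it is not closed under proper differences, since \(\mu(A)\leq c\,\nu(A)\) and \(\mu(B)\leq c\,\nu(B)\) for \(B\subseteq A\) do not imply \(\mu(A\setminus B)\leq c\,\nu(A\setminus B)\) (the slacks can go the wrong way). So the monotone-class route as you describe it does not go through. Your parenthetical density argument is the correct repair and should be promoted to the main line: the Radon--Nikodym derivative of each coordinate law is bounded a.e.\ by \(\exp(\varepsilon\abs{(\x\cap\mathcal{D}_i)\oplus(\xp\cap\mathcal{D}_i)})\), the product measure's density is the product of the coordinate densities, and integrating over an arbitrary measurable \(S\) yields the bound for all output sets, not just rectangles. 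With that substitution the proof is complete.
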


Note that parallel composition only allows for mechanisms with the same \(\varepsilon\)\hyp DP guarantee. We can extend the theorem to mechanisms \(M_i\) with guarantees \(\varepsilon_i\), where not necessarily \(\varepsilon_i = \varepsilon_j\) for all \(i\neq j\). Then we get a guarantee of \(\max_i \varepsilon_i\) for the sequence.

Parallel composition starts from a single database that is split in a particular way, whereas our examples start with separate databases. To apply parallel composition to our examples, we first need to assemble the separate databases into one database. Assume that there are \(k\) databases \(\x_1,\dots,\x_k\) with data domains \(\tilde{\mathcal{D}}_1,\dots,\tilde{\mathcal{D}}_k\). To each database we add a column that contains the database index, and merge the databases to obtain a single database \(\x\). We then split this database according to the partition \(\mathcal{D}_i = (\bigcup_j \tilde{\mathcal{D}}_j)\times \{i\}\). The mechanism \(M_i\) that we invoke in iteration \(i\) stays the same as before and so does its DP\hyp guarantee. However, we now need to compute the maximal symmetric difference between two neighboring databases \(\x\) and \(\xp\). In the unbounded case of Example 1 (for the other examples it works analogously), \(365\) of the databases \(\x_i\) that make up \(\x\) or \(\xp\) may differ between \(\x\) and \(\xp\). This means that the symmetric difference between \(\x\) and \(\xp\) is \(365\) in the worst case. Applying Def.~\ref{def:dp_mcsherry} then gives us a DP guarantee according to Def.~\ref{def:dp} of \(365\max_i \varepsilon_i\). In the unbounded case, \(730\) of the databases \(\x_i\) may differ between \(\x\) and \(\xp\), and parallel composition hence gives us a guarantee of \(730\max_i \varepsilon_i\). These guarantees are worse than the guarantees via the GCE in three ways: first, they take the maximum over the DP guarantees of all mechanisms as the guarantee for all mechanisms, whereas via the GCE we only need to take the worst \(365\) (or 730) guarantees. Further, parallel composition simply adds up the DP guarantees, wheres the GCE allows us to apply any (advanced) composition theorem, typically resulting in much better bounds. Finally, parallel composition only applies to \(\varepsilon\)\hyp DP mechanisms, but not to \((\varepsilon,\delta)\)\hyp DP mechanisms. This last point, however, has been addressed by extending parallel composition to the even more general setting of \(f\)\hyp DP \cite{smith2022making}.

\subsection{Understanding the Privacy Implications of Limiting the Adversary's Prior Knowledge}
\label{sec:adversaries}
\xhdr{Motivation}
In some cases, there might only be certain hypotheses that we do not want the adversary to be able to distinguish between. Think of our earlier example of an adversary that we want to prevent from learning whether a target record was present in at least one of the databases. Or it might be desirable to give a worst case guarantee with respect to the (from a privacy perspective) worst pair of hypotheses or worst piece of information that the adversary might want to learn, and give better guarantees if the adversary wants to learn a non\hyp worst case piece of information. This allows for more transparency and a better understanding of what exactly a sequence of mechanisms leaks. It also gives new insights into composition theorems, because they can now be compared not only for worst case adversaries.

In the remainder of this section, we first describe how to compute an HDP guarantee with respect to a general set \(\mathcal{P}\). Then we apply this technique to an example with a uniform hypothesis. Finally, we analyze the example in a different way by showing a connection between adversaries with a uniform prior and subsampling.

\xhdr{Computing an HDP guarantee}
For each tuple of distributions \((p_0,p_1)\in\mathcal{P}\), we would like to find a tuple \((\varepsilon(p_0,p_1),\delta(p_0,p_1))\) such that \Eqref{eq:hdp0} and \Eqref{eq:hdp1} in Def.~\ref{def:hdp} hold for these \((p_0,p_1)\) and \((\varepsilon(p_0,p_1),\delta(p_0,p_1))\). To give a guarantee over the entire set \(\mathcal{P}\), we simply take the maximum over all \(\varepsilon(p_0,p_1)\) and \(\delta(p_0,p_1)\) with \((p_0,p_1)\in\mathcal{P}\).

How do we find valid \((\varepsilon(p_0,p_1),\delta(p_0,p_1))\) for a given pair of distributions \((p_0,p_1)\)? We use the notation from the proof of \Thmref{thm:equivalence}: \(p_0\) and \(p_1\) are represented as sets \(T_0 = \{(\bb_0, p_0(\bb_0))\mid \bb_0\in\supp(p_0)\}\) and \(T_1 = \{(\bb_1, p_1(\bb_1))\mid \bb_1\in\supp(p_1)\}\) of tuples of binary vectors and weights. By running \Algref{alg:refine} in \Appref{app:refinement}, we can refine the tuples in \(T_0\) and \(T_1\) to tuples \(U_0\) and \(U_1\) such that there is a perfect matching of the tuples in \(U_0\) and \(U_1\), where the matched tuples have the same weight. More precisely, \(U_0 = \{(\bb_0^1,w_0^1),\dots,(\bb_0^n,w_0^n)\}\) and \(U_1 = \{(\bb_1^1,w_1^1),\dots,(\bb_1^n,w_1^n)\}\) such that \(w_0^i=w_1^i\) for all \(i\). For each pair \(\bb_0^i, \bb_1^i\), we now need to find \((\varepsilon(\bb_0^i, \bb_1^i),\delta(\bb_0^i, \bb_1^i))\) such that
\begin{equation*}
    \Pr(V^{\bb_0^i}\in S) \leq e^{\varepsilon(\bb_0^i, \bb_1^i)} \Pr(V^{\bb_1^i}\in S) + \delta(\bb_0^i, \bb_1^i)
\end{equation*}
and
\begin{equation*}
    \Pr(V^{\bb_1^i}\in S) \leq e^{\varepsilon(\bb_0^i, \bb_1^i)} \Pr(V^{\bb_0^i}\in S) + \delta(\bb_0^i, \bb_1^i)
\end{equation*}
for all sets of views \(S\). With this we can then compute
\begin{align*}
    \delta(p_0,p_1) &= \sum_{i=1}^n w_0^i \delta(\bb_0^i, \bb_1^i),\\
    \varepsilon(p_0,p_1) &= \ln\left[\sum_{i=1}^n w_0^i e^{\varepsilon(\bb_0^i, \bb_1^i)}\right].
\end{align*}

So how to find valid \((\varepsilon(\bb_0^i, \bb_1^i),\delta(\bb_0^i, \bb_1^i))\)? \Thmref{thm:equivalence} implies that we could simply take a standard composition theorem and apply it to the given sequence of mechanisms. However, for \(\bb_0^i, \bb_1^i\) with \(\bb_0^i \oplus \bb_1^i \neq \vec{1}\), this privacy guarantee will be suboptimal. Let \(j(1),\dots,j(k')\) be the indices in which \(\bb_0^i\) and \(\bb_1^i\) coincide. Then the values of the mechanisms in iterations \(j(1),\dots,j(k')\) do not help the adversary in deciding between \(\bb_0^i\) and \(\bb_1^i\), because
\begin{align*}
    &\Pr(V^{\bb_0^i}\in S\mid Y_{j(1)},\dots Y_{j(k')})\\
    &= \Pr(V^{\bb_1^i}\in S\mid Y_{j(1)},\dots Y_{j(k')}).
\end{align*}
Hence we only need to compose over the \(k-k'\) iterations in which \(\bb_0^i\) and \(\bb_1^i\) do not coincide. This can be done by a standard composition theorem.

\xhdr{Example}
Let us analyze the example where we want to prevent the adversary from learning whether a specific individual contributed to at least one of the databases or to none of them. Assume w.l.o.g.\ that \(\x^0_i\) is always the database that does not contain the corresponding record, \(\x^1_i\) is the database that does contain the record, and assume further that all databases are different. If we assume that the adversary has no further knowledge about the membership of the target record, then the hypothesis \(p_0\) that the individual contributed to none of the databases would be given by \(p_0(\vec{0}) = 1\), and the hypothesis \(p_1\) that the individual contributed to at least one database would be the uniform distribution over all \(\bb\in B\setminus \{\vec{0}\}\) (uninformative prior). After tuple refinement, \(U_0=\{(\vec{0},1/(2^k - 1))\}^{2^k-1}\), \(U_1=\{(\bb,1/(2^k - 1))\mid \bb\in B\setminus \{\vec{0}\}\}\). Assume that \(k\) is small and hence simple composition is used as the composition theorem. Assume further that each of the mechanisms is \((\varepsilon,\delta)\)\hyp differentially private. Then for a vector \(\bb\) with \(k'\) one entries, \((\varepsilon(\bb, \vec{0}),\delta(\bb, \vec{0})) = (k' \varepsilon, k' \delta)\). We hence get
\begin{align*}
    \delta(p_0, p_1) &= \frac{1}{2^k - 1} \sum_{j=1}^k \binom{k}{j} j \delta\\
    &= \frac{1}{2^k - 1} k 2^{k-1} \delta
\end{align*}
and
\begin{align}
    \varepsilon(p_0, p_1) &= \ln\left[\frac{1}{2^k - 1} \sum_{j=1}^k \binom{k}{j} e^{j\varepsilon}\right]\label{eq:ineq_delta_p_1}\\
    &= \ln\left[\frac{1}{2^k - 1}\left(\left(1 + e^{\varepsilon}\right)^k-1\right)\right]\label{eq:ineq_delta_p_2},
\end{align}
where in the step from \Eqref{eq:ineq_delta_p_1} to \Eqref{eq:ineq_delta_p_2} we apply the binomial formula with \(x=e^{\varepsilon}\) and \(y=1\), and subtract the missing \(0\)\hyp th summand.
We notice that the improvement over the worst case guarantee of \((k\varepsilon,k\delta)\) is particularly visible for \(\delta\) with a factor of almost \(1/2\).
Here we used simple composition, which will be suboptimal for larger values of \(k\). We now show a more general derivation that allows for using any composition theorem that is compatible with \(M_1,\dots,M_k\).

\xhdr{General derivation for the example using subsampling}
We show a connection between the adversary with the uniform hypothesis from the previous example and subsampling of databases. This allows us to analyze the privacy guarantee against such an adversary using a privacy amplification theorem for subsampling \cite[Thm.~8]{balle2018privacy}.
Assume w.l.o.g.\ that \(\x_i^1 = \x_i^0 \cup x_i\) for some record \(x_i\) for each iteration \(i\). Further assume w.l.o.g.\ that \(\x_i^0 = \emptyset\) (and thus \(\x_i^1 = \{x_i\}\)) for all \(i\). We can assume this w.l.o.g.\ due to the following: Instead of requiring the adversary to return two databases \(\x_i^0,\ \x_i^1\), parameters \(w_i\) and a mechanism \(M_i\) and then have the data curator return \(M_i(\x_i^{\bb_i},w_i)\), we could require the adversary to return only a single record \(x_i\), parameters \(w_i\) and a mechanism \(\tilde{M_i}\). The adversary would define \(\tilde{M_i}\) via \(\tilde{M_i,w}(V) = M_i(\x_i^0\cup V,w)\) for any set of records \(V\) and parameters \(w\), and the data curator would return \(\tilde{M_i}(\emptyset,w_i)\) if \(\bb_i = 0\), and \(\tilde{M_i}(\{x_i\},w_i)\) if \(\bb_i = 0\). This is equivalent to the original interaction between adversary and data curator. Note that this is only a construction within the theoretical framework of composition experiments, which --- even in its original form --- is just a tool to analyze privacy mechanisms; in reality, neither does the adversary give two databases to the data curator, nor do they give the data curator a mechanism with a database baked into it.

In the HDP example that we are analyzing, we want to find \(\varepsilon,\delta\) such that
\begin{align*}
    \mathbb{E}_{\bb\sim U(B\setminus\{\vec{0}\})}[\Pr(V^{\bb}\in S)] &\leq e^{\varepsilon}\Pr(V^{\vec{0}}\in S) + \delta,\\
    \Pr(V^{\vec{0}}\in S) &\leq e^{\varepsilon} \mathbb{E}_{\bb\sim U(B\setminus\{\vec{0}\})}[\Pr(V^{\bb}\in S)] + \delta
\end{align*}
for any output set \(S\), where \(U(B\setminus\{\vec{0}\})\) denotes the uniform distribution over the set of binary vectors with at least one 1 entry. Now consider the following modified data release process: In the setting of the classic composition theorem, in each iteration, every database record is first sampled independently with probability \(1/2\), and the mechanism is then only invoked on the resulting sample of records. Let \(\mathcal{S}_{1/2}(M)\) be the mechanism that is obtained when first sampling each record from the database independently with probability \(1/2\) and then applying \(M\) to the resulting sample. Let then \((\tilde{\varepsilon},\tilde{\delta}) = \compguarantee(\mathcal{S}_{1/2}(M_1),\dots,\mathcal{S}_{1/2}(M_k))\) denote a guarantee for this composition with subsampling. Such a guarantee can be obtained by applying the corresponding amplification by subsampling theorem \cite[Thm.~8]{balle2018privacy} to each mechanism \(M_i\), and then applying a composition theorem that is compatible with \((\mathcal{S}_{1/2}(M_1),\dots,\mathcal{S}_{1/2}(M_k))\). Since in our case there is only a single record \(x_i\), subsampling is equivalent to choosing either \(\x_i^0\) or \(\x_i^1\), each with probability \(1/2\). Thus, we have
\begin{align*}
    &\Pr((r,M_1(\x_1),\dots,M_k(\x_k))\in S\\
    &\hphantom{\Pr((} \mid \x_1\in U(\x_1^0,\x_1^1),\dots,\x_k\in U(\x_k^0,\x_k^1))\\
    &\leq e^{\tilde{\varepsilon}} \Pr((r,M_1(\x_1^0),\dots,M_k(\x_k^0))\in S) + \tilde{\delta}
\end{align*}
for any output set \(S\), and the inverse inequality with the two probabilities exchanged. With this, we can show the following HDP guarantee for our example of an adversary that compares the \(\vec{0}\) hypothesis with the hypothesis that is the uniform distribution over all non\hyp zero binary vectors (proof in \Appref{app:proof_subsampling}):
\begin{theorem}
\label{thm:subsampling}
A sequence of mechanisms \(M_1,\dots,M_k\) fulfills \((\varepsilon,\delta)\)\hyp HDP with respect to \(\mathcal{P}=\{([p_0(\vec{0})=1, p_0(\cc)=0\text{ for all }\cc\neq\vec{0}], U(B\setminus \{\vec{0}\}))\} \) for
\begin{equation*}
    \varepsilon = \ln\left[\frac{1}{2^k - 1} \sum_{i=1}^k 2^{k-i}e^{\hat{\varepsilon}_i}\right],\quad
    \delta = \frac{1}{2^k - 1} \sum_{i=1}^k 2^{k-i}\hat{\delta}_i,
\end{equation*}
where
\begin{equation*}
    (\hat{\varepsilon}_i,\hat{\delta}_i) = \compguarantee(M_i,\mathcal{S}_{1/2}(M_{i+1}),\dots,\mathcal{S}_{1/2}(M_k)).
\end{equation*}
\end{theorem}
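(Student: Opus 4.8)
\emph{Proof strategy.} The plan is to combine the subsampling reformulation set up just before the statement with a decomposition of the uninformative prior $U(B\setminus\{\vec{0}\})$ according to the position of the first $1$-entry of $\bb$. Throughout I keep the w.l.o.g.\ reductions $\x_i^1=\x_i^0\cup\{x_i\}$ and $\x_i^0=\emptyset$ (so $\x_i^1=\{x_i\}$) justified in the text, and I abbreviate $P_0=\Pr(V^{\vec{0}}\in S)$ and $A=\mathbb{E}_{\bb\sim U(B\setminus\{\vec{0}\})}[\Pr(V^{\bb}\in S)]$. For $i\in\{1,\dots,k\}$ let $U_i$ be the uniform distribution on the $2^{k-i}$ binary vectors whose first $1$-entry sits at coordinate $i$. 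Since these events partition $B\setminus\{\vec{0}\}$ and $\sum_{i=1}^k 2^{k-i}=2^k-1$, one has the identity $A=\frac{1}{2^k-1}\sum_{i=1}^k 2^{k-i}a_i$ with $a_i\defeq\mathbb{E}_{\bb\sim U_i}[\Pr(V^{\bb}\in S)]$, so it suffices to control each $a_i$ against $P_0$.

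Fix $i$. Every $\bb$ in the support of $U_i$ has $b_1=\dots=b_{i-1}=0$, $b_i=1$, and $b_{i+1},\dots,b_k$ independent fair coins, so under $\bb\sim U_i$ the first $i-1$ rounds invoke each $M_j$ on $\emptyset$ --- exactly as in the $\vec{0}$ scenario --- round $i$ invokes $M_i$ on $\{x_i\}$, and each later round $j>i$ invokes $M_j$ on a uniformly chosen one of $\emptyset$ and $\{x_j\}$; because there is only the single record $x_j$, this last step is precisely $\mathcal{S}_{1/2}(M_j)$ applied to $\{x_j\}$ (and $\mathcal{S}_{1/2}(M_j)(\emptyset)=M_j(\emptyset)$ is what the $\vec{0}$ scenario does in round $j>i$). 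I would make this rigorous via a simulating adversary for the classic composition experiment on the sequence $(M_i,\mathcal{S}_{1/2}(M_{i+1}),\dots,\mathcal{S}_{1/2}(M_k))$: it runs the first $i-1$ rounds of the original adversary internally, feeding it self-generated samples of $M_j(\emptyset)$ (legitimate, since $\emptyset$ holds no private data), and then plays rounds $i,\dots,k$ against the curator, returning the neighboring pair $(\emptyset,\{x_j\})$ together with $M_i$, respectively $\mathcal{S}_{1/2}(M_j)$. The DP parameters of the subsampled mechanisms come from amplification by subsampling \cite[Thm.~8]{balle2018privacy}, and since rounds on which the two scenarios coincide carry no distinguishing information (the argument used in \Secref{sec:adversaries}), applying $\compguarantee(M_i,\mathcal{S}_{1/2}(M_{i+1}),\dots,\mathcal{S}_{1/2}(M_k))=(\hat{\varepsilon}_i,\hat{\delta}_i)$ yields, for every set of views $S$, both $a_i\le e^{\hat{\varepsilon}_i}P_0+\hat{\delta}_i$ and $P_0\le e^{\hat{\varepsilon}_i}a_i+\hat{\delta}_i$.

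For the first inequality of \Defref{def:hdp} ($A\le e^{\varepsilon}P_0+\delta$), multiply $a_i\le e^{\hat{\varepsilon}_i}P_0+\hat{\delta}_i$ by $2^{k-i}/(2^k-1)$ and sum over $i$: the weights sum to $1$, so the $P_0$-coefficient collapses to $e^{\varepsilon}=\frac{1}{2^k-1}\sum_i 2^{k-i}e^{\hat{\varepsilon}_i}$ and the additive part to $\delta=\frac{1}{2^k-1}\sum_i 2^{k-i}\hat{\delta}_i$, exactly the parameters claimed. For the reverse inequality ($P_0\le e^{\varepsilon}A+\delta$), the same weighted sum of $P_0\le e^{\hat{\varepsilon}_i}a_i+\hat{\delta}_i$ gives $P_0\le\frac{1}{2^k-1}\sum_i 2^{k-i}e^{\hat{\varepsilon}_i}a_i+\delta$, and one still has to show that, with $w_i=2^{k-i}/(2^k-1)$, the sum $\sum_i w_i e^{\hat{\varepsilon}_i}a_i$ is at most $\big(\sum_i w_i e^{\hat{\varepsilon}_i}\big)\big(\sum_i w_i a_i\big)=e^{\varepsilon}A$ --- a negative-correlation (Chebyshev-type) statement relating $(e^{\hat{\varepsilon}_i})_i$ and $(a_i)_i$.

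I expect this last step to be the main obstacle. It is not implied by the per-round bounds against $P_0$ alone, so the proof has to exploit additional structure: monotonicity of $\compguarantee$ under adding a mechanism or weakening subsampling (which makes $\hat{\varepsilon}_i$ and $\hat{\delta}_i$ non-increasing in $i$), together with the further inequalities that the GCE forces among the $a_i$, should pin the $a_i$ down enough for the rearrangement to go through; a fallback is to first settle for the slightly looser additive term $e^{\varepsilon}\delta$, which drops out of $\sum_i w_i e^{-\hat{\varepsilon}_i}\ge e^{-\varepsilon}$ (weighted AM--HM), and then tighten it. Everything else --- the first-$1$ decomposition, the reduction of the GCE to the classic composition experiment with subsampled mechanisms, the ``coinciding rounds are free'' step, and the forward combination --- is routine given the tools already built in \Secref{sec:relationship} and \Secref{sec:adversaries}.
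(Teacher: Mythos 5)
Your proposal follows essentially the same route as the paper's proof: the same partition of $B\setminus\{\vec{0}\}$ by the position of the first $1$-entry into blocks of size $2^{k-i}$, the same identification of the conditional view distribution on block $i$ with the process $(M_1(\x_1^0),\dots,M_{i-1}(\x_{i-1}^0),M_i(\x_i^1),\mathcal{S}_{1/2}(M_{i+1}),\dots,\mathcal{S}_{1/2}(M_k))$, the same per-block comparison of $a_i$ against $P_0=\Pr(V^{\vec{0}}\in S)$ using $(\hat{\varepsilon}_i,\hat{\delta}_i)=\compguarantee(M_i,\mathcal{S}_{1/2}(M_{i+1}),\dots,\mathcal{S}_{1/2}(M_k))$, and the same weighted averaging with weights $w_i=2^{k-i}/(2^k-1)$ for the forward inequality. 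Your explicit simulating-adversary reduction for the rounds $1,\dots,i-1$ on which the two processes coincide is a useful elaboration of a step the paper only gestures at.

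The one step you leave open --- the reverse inequality $P_0\le e^{\varepsilon}A+\delta$ --- is also not spelled out in the paper (its proof ends with ``we compare this quantity \dots and thus get the guarantee''), so you have not missed an idea the paper supplies; you have correctly identified a step it elides. It can be closed without the negative-correlation statement you worry about: instead of multiplying $P_0\le e^{\hat{\varepsilon}_i}a_i+\hat{\delta}_i$ by $w_i$, multiply by $w_ie^{-\hat{\varepsilon}_i}$ and sum to get $\bigl(\sum_i w_ie^{-\hat{\varepsilon}_i}\bigr)P_0\le A+\sum_i w_ie^{-\hat{\varepsilon}_i}\hat{\delta}_i$. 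Cauchy--Schwarz gives $\bigl(\sum_i w_ie^{-\hat{\varepsilon}_i}\bigr)\bigl(\sum_i w_ie^{\hat{\varepsilon}_i}\bigr)\ge 1$, i.e.\ $1/\sum_i w_ie^{-\hat{\varepsilon}_i}\le e^{\varepsilon}$, which handles the multiplicative part. For the additive part, provided $(\hat{\varepsilon}_i,\hat{\delta}_i)$ is non-increasing in $i$ (which holds for standard composition theorems here, since fewer mechanisms are composed as $i$ grows), $e^{-\hat{\varepsilon}_i}$ and $\hat{\delta}_i$ are oppositely ordered, so Chebyshev's sum inequality yields $\sum_i w_ie^{-\hat{\varepsilon}_i}\hat{\delta}_i\le\bigl(\sum_i w_ie^{-\hat{\varepsilon}_i}\bigr)\delta$, and the additive term is at most $\delta$ after dividing through. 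With that supplement your argument is complete and coincides with the paper's; without it, your forward direction and per-block bounds already match everything the paper's proof makes explicit.
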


Using simple composition for composing the one non\hyp subsampled mechanism in each summand with the subsampled mechanisms, we get the following corollary:
\begin{corollary}
A sequence of mechanisms \(M_1,\dots,M_k\) fulfills \((\varepsilon,\delta)\)\hyp HDP with respect to \(\mathcal{P}=\{([p_0(\vec{0})=1, p_0(\cc)=0\text{ for all }\cc\neq\vec{0}], U(B\setminus \{\vec{0}\})) \} \) for
\begingroup
\allowdisplaybreaks % allows for pagebreaks
\begin{align*}
    \varepsilon &= \ln\left[\frac{1}{2^k - 1} \sum_{i=1}^k 2^{k-i}e^{\varepsilon_i + \hat{\varepsilon}_i}\right],\\
    \delta &= \frac{1}{2^k - 1} \sum_{i=1}^k 2^{k-i}(\delta_i + \hat{\delta}_i),
\end{align*}
\endgroup
where
\begin{equation*}
    (\hat{\varepsilon}_i,\hat{\delta}_i) = \compguarantee(\mathcal{S}_{1/2}(M_{i+1}),\dots,\mathcal{S}_{1/2}(M_k)).
\end{equation*}
\end{corollary}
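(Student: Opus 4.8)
\emph{Proof proposal.} The plan is to obtain the corollary directly from \Thmref{thm:subsampling} by substituting, inside its closed-form $(\varepsilon,\delta)$ expressions, a weaker but more convenient bound for each per-summand guarantee: namely the bound that the simple composition theorem produces once the single non-subsampled mechanism $M_i$ is peeled off the block of subsampled mechanisms $\mathcal{S}_{1/2}(M_{i+1}),\dots,\mathcal{S}_{1/2}(M_k)$. Note that both statements concern the same $\mathcal{P}=\{([p_0(\vec{0})=1,\,p_0(\cc)=0\text{ for }\cc\neq\vec{0}],\,U(B\setminus\{\vec{0}\}))\}$, so the task is only to certify that the (larger) $(\varepsilon,\delta)$ of the corollary still satisfies \dasheqnref{eq:hdp0}{eq:hdp1}.

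First I would recall that in \Thmref{thm:subsampling} the quantities $(\hat{\varepsilon}_i,\hat{\delta}_i)$ denote a valid composition guarantee (Def.~\ref{def:composition}) for the length-$(k-i+1)$ sequence $M_i,\mathcal{S}_{1/2}(M_{i+1}),\dots,\mathcal{S}_{1/2}(M_k)$. Each $M_i$ is $(\varepsilon_i,\delta_i)$-DP, and the subsampled tail $\mathcal{S}_{1/2}(M_{i+1}),\dots,\mathcal{S}_{1/2}(M_k)$ admits, by assumption, the composition guarantee that the corollary now writes as $\compguarantee(\mathcal{S}_{1/2}(M_{i+1}),\dots,\mathcal{S}_{1/2}(M_k))=(\hat{\varepsilon}_i,\hat{\delta}_i)$. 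The (adaptive) simple composition theorem then lets me add these two blocks, so that $(\varepsilon_i+\hat{\varepsilon}_i,\ \delta_i+\hat{\delta}_i)$ is itself an admissible value of $\compguarantee(M_i,\mathcal{S}_{1/2}(M_{i+1}),\dots,\mathcal{S}_{1/2}(M_k))$; the case $i=k$ is the degenerate one where the tail is empty, $\hat{\varepsilon}_k=\hat{\delta}_k=0$, and the bound reduces to $\compguarantee(M_k)=(\varepsilon_k,\delta_k)$.

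Next I would plug this bound into the formulas of \Thmref{thm:subsampling}. There, $\delta=\tfrac{1}{2^k-1}\sum_{i=1}^k 2^{k-i}\hat{\delta}_i$ has only strictly positive weights $2^{k-i}/(2^k-1)$, hence is nondecreasing in each $\hat{\delta}_i$; similarly $\varepsilon=\ln\bigl[\tfrac{1}{2^k-1}\sum_{i=1}^k 2^{k-i}e^{\hat{\varepsilon}_i}\bigr]$ is nondecreasing in each $\hat{\varepsilon}_i$ because $\exp$ and $\ln$ are increasing. Replacing each $\hat{\varepsilon}_i$ by $\varepsilon_i+\hat{\varepsilon}_i$ and each $\hat{\delta}_i$ by $\delta_i+\hat{\delta}_i$ (now in the corollary's sense of $\hat{\varepsilon}_i,\hat{\delta}_i$) only enlarges the right-hand sides of \dasheqnref{eq:hdp0}{eq:hdp1} that \Thmref{thm:subsampling} already established, so the same inequalities hold with the corollary's $(\varepsilon,\delta)$, which is exactly the claim.

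The only step that requires care is the block-wise use of simple composition in the second paragraph: one must justify that composing a single $(\varepsilon_i,\delta_i)$-DP mechanism with the already-composed, adversarially (and adaptively) chosen tail of subsampled mechanisms again yields additivity of the $\varepsilon$'s and the $\delta$'s. This is precisely the standard sequential/adaptive form of the simple composition theorem --- a guarantee for a prefix and a guarantee for a suffix add --- and it is legitimate to invoke it here because $\compguarantee$ is only required to return \emph{some} valid composition guarantee, not the optimal one. Everything else is monotone bookkeeping, so no further obstacle arises.
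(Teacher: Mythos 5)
Your proposal is correct and matches the paper's intended argument: the paper derives this corollary from \Thmref{thm:subsampling} in exactly this way, by instantiating \(\compguarantee(M_i,\mathcal{S}_{1/2}(M_{i+1}),\dots,\mathcal{S}_{1/2}(M_k))\) as the simple (adaptive) composition of the \((\varepsilon_i,\delta_i)\)-DP mechanism \(M_i\) with the tail guarantee \(\compguarantee(\mathcal{S}_{1/2}(M_{i+1}),\dots,\mathcal{S}_{1/2}(M_k))\), and substituting \((\varepsilon_i+\hat{\varepsilon}_i,\delta_i+\hat{\delta}_i)\) into the theorem's closed-form expressions. Your added monotonicity remark and the degenerate case \(i=k\) are harmless elaborations; since \(\compguarantee\) is defined to return any valid guarantee, the direct instantiation already suffices.
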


By composing the subsampled mechanisms via simple composition and using \cite[Thm.~8]{balle2018privacy} for the amplification via subsampling, we can recover the results from the original derivation that we did earlier in this example:
\begin{corollary}
Let \(M_1,\dots,M_k\) be a sequence of mechanisms with \(\varepsilon_i = \varepsilon'\) and \(\delta_i = \delta'\) for all \(i\) and fixed \(\varepsilon',\delta'\). Then \(M_1,\dots,M_k\) fulfills \((\varepsilon,\delta)\)\hyp HDP with respect to \(\mathcal{P}=\{([p_0(\vec{0})=1, p_0(\cc)=0\text{ for all }\cc\neq\vec{0}], U(B\setminus \vec{0})) \} \) for
\begin{equation*}
    \varepsilon = \ln\left[\frac{\left(e^{\varepsilon'}+1\right)^k-1}{2^k - 1}\right],\quad
    \delta = \frac{2^{k-1}}{2^k - 1} k \delta'.
\end{equation*}
\end{corollary}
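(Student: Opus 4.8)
The plan is to obtain this corollary by instantiating the preceding corollary (the one that bounds $\varepsilon,\delta$ in terms of $\varepsilon_i,\delta_i$ and $(\hat\varepsilon_i,\hat\delta_i)=\compguarantee(\mathcal{S}_{1/2}(M_{i+1}),\dots,\mathcal{S}_{1/2}(M_k))$) with its two remaining free choices: the per-mechanism DP guarantees and the composition theorem used for the subsampled mechanisms. First, since every $M_i$ is $(\varepsilon',\delta')$-differentially private, I would invoke the amplification-by-subsampling bound of Balle et al.\ \cite[Thm.~8]{balle2018privacy} with subsampling rate $1/2$: this makes each $\mathcal{S}_{1/2}(M_i)$ a $(\tilde\varepsilon,\tilde\delta)$-DP mechanism with $\tilde\varepsilon=\ln\!\bigl(1+\tfrac12(e^{\varepsilon'}-1)\bigr)=\ln\!\bigl(\tfrac{1+e^{\varepsilon'}}{2}\bigr)$ and $\tilde\delta=\tfrac{\delta'}{2}$. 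Second, I would compose the $k-i$ subsampled mechanisms in the $i$-th summand via simple composition, which yields $(\hat\varepsilon_i,\hat\delta_i)=\bigl((k-i)\tilde\varepsilon,\,(k-i)\tilde\delta\bigr)$, i.e.\ $e^{\hat\varepsilon_i}=\bigl(\tfrac{1+e^{\varepsilon'}}{2}\bigr)^{k-i}$ and $\hat\delta_i=(k-i)\tfrac{\delta'}{2}$.

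Third, I substitute these values, together with $\varepsilon_i=\varepsilon'$ and $\delta_i=\delta'$, into the two formulas of the preceding corollary. For $\varepsilon$, the weight $2^{k-i}$ in the numerator cancels the $2^{k-i}$ in the denominator of $\bigl(\tfrac{1+e^{\varepsilon'}}{2}\bigr)^{k-i}$, so the sum inside the logarithm becomes $e^{\varepsilon'}\sum_{i=1}^{k}(1+e^{\varepsilon'})^{k-i}$; re-indexing with $j=k-i$ turns this into the geometric series $\sum_{j=0}^{k-1}(1+e^{\varepsilon'})^{j}=\tfrac{(1+e^{\varepsilon'})^k-1}{e^{\varepsilon'}}$, the $e^{\varepsilon'}$ factors cancel, and dividing by $2^k-1$ and taking the logarithm gives $\varepsilon=\ln\!\bigl[\tfrac{(e^{\varepsilon'}+1)^k-1}{2^k-1}\bigr]$, exactly as claimed. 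For $\delta$, I split $\sum_{i=1}^{k}2^{k-i}\bigl(\delta'+(k-i)\tfrac{\delta'}{2}\bigr)$ as $\delta'\sum_{j=0}^{k-1}2^{j}+\tfrac{\delta'}{2}\sum_{j=0}^{k-1}j\,2^{j}$ and use the closed forms $\sum_{j=0}^{k-1}2^{j}=2^k-1$ and $\sum_{j=0}^{k-1}j\,2^{j}=(k-2)2^{k}+2$; the bracket collapses to $k\,2^{k-1}\delta'$, so $\delta=\tfrac{2^{k-1}}{2^k-1}k\delta'$.

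The step that needs the most care is not conceptual but bookkeeping: one must use precisely the amplification constant $\tilde\varepsilon=\ln\tfrac{1+e^{\varepsilon'}}{2}$ from \cite{balle2018privacy}, because it is exactly this value that makes the $2^{k-i}$ weights in the preceding corollary telescope against the geometric factors $\bigl(\tfrac{1+e^{\varepsilon'}}{2}\bigr)^{k-i}$ and collapse the sum to a clean closed form (this is why the resulting bound coincides with the direct derivation given earlier in the example). Besides that, the only nonobvious ingredient is the arithmetic--geometric identity $\sum_{j=0}^{k-1}j\,2^{j}=(k-2)2^{k}+2$ used for $\delta$; everything else is a straightforward substitution into the preceding corollary.
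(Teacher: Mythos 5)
Your proposal is correct and follows essentially the same route as the paper: both instantiate the preceding corollary with the Balle et al.\ amplification constants $\tilde\varepsilon=\ln\bigl(\tfrac{1+e^{\varepsilon'}}{2}\bigr)$, $\tilde\delta=\delta'/2$ and simple composition of the $k-i$ subsampled mechanisms, then collapse the resulting sums via the geometric series (for $\varepsilon$) and the arithmetic--geometric identity (for $\delta$). The only cosmetic difference is that the paper obtains $\sum_{j=0}^{k-1} j\,2^{j-1}$ by differentiating the geometric series, whereas you quote the equivalent closed form $\sum_{j=0}^{k-1} j\,2^{j}=(k-2)2^{k}+2$ directly; both yield $k\,2^{k-1}\delta'$.
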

\begin{proof}
Plugging in \(\delta/2\) from the amplification theorem and then composing \(k-i\) times, we get
\begin{align}
    \nonumber
    \delta &= \frac{1}{2^k - 1}\sum_{i=1}^k 2^{k-i} \left(1+\frac{k-i}{2}\right)\delta'\\
    &= \frac{\delta'}{2^k - 1} \left( \sum_{j=0}^{k-1} 2^{j} + \sum_{j=0}^{k-1} j 2^{j-1}  \label{eq:c6d1} \right) \\
    &= \frac{2^{k-1}}{2^k - 1} k \delta' \ , \label{eq:c6d2}
\end{align}
where to go from~\Eqref{eq:c6d1} and~\Eqref{eq:c6d2}, we use the fact that the first sum is a partial geometric series $\sum_{i=0}^n x^i = \frac{1 - x^{n+1}}{1 - x}$ for $x \neq 1$. The second sum can be simplified by taking the derivative of both side of the previous expression with respect to $x$.

Plugging in \(\ln\left(1+\frac{1}{2}(e^{\varepsilon'}-1)\right)\) from the amplification theorem and then composing \(k-i\) times, we get
\begin{equation*}
    \varepsilon = \ln\left[\frac{1}{2^k - 1}\sum_{i=1}^k 2^{k-i} \exp\left[\varepsilon'+(k-i)\ln\left(1+\frac{1}{2}(e^{\varepsilon'}-1)\right)\right]\right]
\end{equation*}
Observe that: 
\begin{align*}
& 2^{k-i} \exp{[\varepsilon'+(k-i)\ln(1+\frac{1}{2}(e^{\varepsilon'}-1) ] } \\
&= e^{\varepsilon'} \cdot 2^{k-i} \exp{[(k-i)\ln(1+\frac{1}{2}(e^{\varepsilon'}-1)]} \\
&= e^{\varepsilon'} \cdot (1+e^{\varepsilon'})^{k-i} .
\end{align*}

So that:
\begin{align*}
    &\sum_{i=1}^k 2^{k-i} \exp\left[\varepsilon'+(k-i)\ln\left(1+\frac{1}{2}(e^{\varepsilon'}-1)\right)\right] \\
    &= e^{\varepsilon'} \sum_{i=1}^k (1+e^{\varepsilon'})^{k-i} \\
    &= (e^{\varepsilon'} + 1)^{k} - 1 \ ,
\end{align*}
which yields the result.
\end{proof}

\section{Conclusions}
\label{sec:conclusions}
Composition is an essential property for a privacy metric --- without it, one cannot reason about the privacy over multiple data releases. In this paper, we point out that the classic composition experiment (used by prior research to prove composition results) does not allow for maximum flexibility in terms of the adversary's choice of databases with respect to the underlying hypotheses.

To address this limitation, we propose a generalized composition experiment (GCE) that enables us to reason about differential privacy guarantees in applications that simply cannot be modeled with the classic composition experiment. We prove that the worst case guarantee of the classic composition experiment coincides with the worst case guarantee for the GCE. This is significant because it implies that despite its limitation, the classic composition experiment does not underestimate the privacy risk. 

However, the added flexibility of the GCE is significant. We analyze two applications and show that the the GCE can: (1) yield improved privacy guarantees when there are constraints on the individual's contribution to the different databases, and (2) analyze the privacy loss with respect to adversaries with varying amount of prior knowledge, thereby allowing us to reason about relaxations of the very strong background knowledge assumption of differential privacy. In this context, we uncover an intriguing connection between adversarial background knowledge and the boosting effects of database subsampling on the differential privacy guarantees.

%%%%%%%%%%%%%%%%%%%%%%%%%%%%%%%

\bibliographystyle{IEEEtran}
\bibliography{references}

\appendices

\section{Refinement Algorithm}
\label{app:refinement}
Both in the proof of \Thmref{thm:equivalence} and in \Secref{sec:adversaries} we are given two distribution \(p_0\) and \(p_1\) over \(B\) and want to apply inequalities that hold with respect to individual points in their supports. \(p_0\) and \(p_1\) can be expressed in terms of the sets of tuples \(T_0 = \{(\bb_0, p_0(\bb_0))\mid \bb_0\in\supp(p_0)\}\) and \(T_1 = \{(\bb_1, p_1(\bb_1))\mid \bb_1\in\supp(p_1)\}\). In order to be able to apply the inequalities, we need to be able to match tuples \((\bb_0,w_0)\in T_0\) and \((\bb_1,w_1)\in T_1\), where \(w_0=w_1\). However, this will not always be possible, since the distributions over probability masses need not be the same for \(p_0\) and \(p_1\). We can solve this problem by refining tuples via \Algref{alg:refine}. It takes as input the sets \(T_0\) and \(T_1\), and in each step refines a tuple \((\bb,w)\) from one of the sets into two tuples \((\bb,w^1),\ (\bb,w^2)\) with \(w = w^1+w^2\). It returns two sets \(U_0,\ U_1\) of such refined tuples for which a matching can be found.

\begin{algorithm}[H]
\caption{\(\refinetuples(T_0,T_1)\)}
\label{alg:refine}
\begin{algorithmic}[1]
\State{\textbf{Input:} \(T_0, T_1\)}
\State{Initialize \(U_0, U_1 \defeq \emptyset\)}
\While{\(T_0\neq\emptyset\)}
    \State{Take and remove some tuple \((\bb_0,w_0)\) from the set \(T_0\) and some tuple \((\bb_1,w_1)\) from the set \(T_1\)}
    \If{\(w_0\leq w_1\)}
        \State{Add \((\bb_0,w_0)\) to \(U_0\) and \((\bb_1,w_0)\) to \(U_1\)}
        \State{Add \((\bb_1,w_1-w_0)\) to \(T_1\)}
    \Else
        \State{Add \((\bb_0,w_1)\) to \(U_0\) and \((\bb_1,w_1)\) to \(U_1\)}
        \State{Add \((\bb_0,w_0-w_1)\) to \(T_0\)}
    \EndIf
\EndWhile
\Return \(U_0,U_1\)
\end{algorithmic}
\end{algorithm}

\section{Proof of \Thmref{thm:equivalence}}
\label{app:proof_equivalence}
\begin{proof}[Proof of \Thmref{thm:equivalence}]
\textit{HDP implies DP:}
Classic composition assumes that there is only a single bit \(b\), which is equivalent to assuming \(\bb=\vec{0}\) or \(\bb=\vec{1}\) in the generalized composition experiment. Assume that \(\mathcal{P} = \mathcal{D}(B)\times \mathcal{D}(B)\) and that the sequence of mechanisms fulfills \((\varepsilon,\delta)\)\hyp HDP. \(\mathcal{P}\) hence contains the pair of distributions \([p_0(\vec{0}) = 1,\ p_0(\bb) = 0\text{ for all }\bb\neq \vec{0}]\) and \([p_1(\vec{1}) = 1,\ p_1(\bb) = 0\text{ for all }\bb\neq \vec{1}]\). We therefore have
\begin{align*}
    \Pr(V^{\vec{0}}\in S) &= \mathbb{E}_{\bb\sim p_0}[\Pr(V^{\bb}\in S)]\\
    &\leq e^{\varepsilon} \mathbb{E}_{\bb\sim p_1}[\Pr(V^{\bb}\in S)] + \delta\\
    &= e^{\varepsilon} \Pr(V^{\vec{1}}\in S) + \delta,
\end{align*}
and the same with 0 and 1 exchanged.
The sequence of mechanisms hence also fulfills \((\varepsilon,\delta)\)\hyp DP.

\textit{DP implies HDP:}
Fix an adversary \(\mathcal{A}\), i.e., a (randomized) algorithm that chooses mechanisms and databases.

To simplify the proof, assume w.l.o.g.\ that the order of the databases \(\x_i^0\) and \(\x_i^1\) in each step is fixed according to some deterministic criterion, e.g., the database size. Instead of choosing the order of the databases within each pair \(\x_i^0,\x_i^1\) of databases and comparing the likelihood of the 0- and of the 1\hyp bit in the classic composition experiment, the adversary may now choose an arbitrary \(\bb_0\in B\) and compare the likelihood of \(\bb_0\) and the vector that results from flipping all bits in \(\bb_0\), which we denote by \(\bb_0'\). In general we denote the vector resulting from flipping all bits in a vector \(\bb\) by \(\bb'\).

First consider the case where \(p_0(\bb_0) = 1\) for some \(\bb_0\) and \(p_1(\bb_1) = 1\) for some \(\bb_1\), i.e., the two distributions are deterministic. Fix the set \(S\). Assume w.l.o.g.\ that \(\Pr(V^{\bb_0}\in S) \leq \Pr(V^{\bb_1}\in S)\). This immediately implies \(\Pr(V^{\bb_0}\in S) \leq e^{\varepsilon} \Pr(V^{\bb_1}\in S) + \delta\). Our goal is to flip bits in both \(\bb_0\) and \(\bb_1\) until we arrive at vectors \(\cc_0\) and \(\cc_1\) with \(\cc_1 = \cc_0'\), which we will use to show that also the reverse inequality \(\Pr(V^{\bb_1}\in S) \leq e^{\varepsilon} \Pr(V^{\bb_0}\in S) + \delta\) holds. Consider the set \(I_{\text{same}}\subset [k]\) of all indexes corresponding to bits that have the same value in \(\bb_0\) and \(\bb_1\). For each such index \(i\in I_{\text{same}}\), one of the following three cases will occur:
\begin{enumerate}
    \item Flipping \(\bb_{0,i}\) decreases \(\Pr(V^{\bb_0}\in S)\).\label{enum:flip1}
    \item Flipping \(\bb_{1,i}\) increases \(\Pr(V^{\bb_1}\in S)\).\label{enum:flip2}
    \item Flipping \(\bb_{0,i}\) leaves \(\Pr(V^{\bb_0}\in S)\) unchanged.\label{enum:flip3}
\end{enumerate}
This holds because of the following. Since the internal randomness of the mechanisms is independent, the outputs of the mechanisms are independent (only the choice of mechanisms depends on the history). We can hence treat the hypotheses in each iteration independently. The output \(y_i\) in iteration \(i\) will either (a) support hypothesis \(\bb_{0,i}\) stronger than \(\bb_{0,i}'\); (b) support hypothesis \(\bb_{0,i}' = \bb_{1,i}'\) stronger than hypothesis \(\bb_{0,i} = \bb_{1,i}\); or (c) support both hypothesis equally. (a), (b) and (c) correspond to the three cases \ref{enum:flip1}, \ref{enum:flip2} and \ref{enum:flip3} in the enumeration above. Except for one specialty for (a), namely when \(\Pr(V^{\bb_0}\in S) = 0\), because then the probability cannot be decreased further. Then (a) corresponds to case \ref{enum:flip3}.

For each index \(i\in I_{\text{same}}\), modify \(\bb_0\) and \(\bb_1\) according to the rules: If \ref{enum:flip1} holds, flip \(\bb_{0,i}\); if \ref{enum:flip1} does not hold but \ref{enum:flip2} holds, flip \(\bb_{1,i}\); if neither \ref{enum:flip1} nor \ref{enum:flip2} holds, flip \(\bb_{0,i}\). This results in a sequence of pairs of vectors \((\bb_0^0, \bb_1^0), (\bb_0^1, \bb_1^1),(\bb_0^2, \bb_1^2),\dots,(\cc_0, \cc_1)\), where \(\cc_1 = \cc_0'\). Because we assume that the inequality in Def.~\ref{def:composition} holds, \(\Pr(V^{\cc_1}\in S) \leq e^{\varepsilon} \Pr(V^{\cc_0}\in S) + \delta\). Due to how we flipped bits, we have \(\Pr(V^{\cc_0}\in S)\leq\dots\leq \Pr(V^{\bb_0^1}\in S)\leq\Pr(V^{\bb_0^0}\in S)\) and \(\Pr(V^{\bb_1^0}\in S)\leq\Pr(V^{\bb_1^1}\in S)\leq\dots\leq\Pr(V^{\cc_1}\in S)\). Thus \(\Pr(V^{\cc_1}\in S) \leq e^{\varepsilon} \Pr(V^{\cc_0}\in S) + \delta\) implies \(\Pr(V^{\bb_1}\in S) \leq e^{\varepsilon} \Pr(V^{\bb_0}\in S) + \delta\). Hence Def.~\ref{def:composition} implies Def.~\ref{def:hdp} for the special case of deterministic \(p_0\) and \(p_1\).

But this also implies Def.~\ref{def:hdp} for the general case where \(p_0\) and \(p_1\) are distributions over multiple vectors. From what we have shown so far, we have that, for all \(\bb_0\in \supp(p_0)\) and all \(\bb_1\in \supp(p_1)\), \(\Pr(V^{\bb_0}\in S) \leq e^{\varepsilon} \Pr(V^{\bb_1}\in S) + \delta\).
Consider the sets of tuples \(T_0 = \{(\bb_0, p_0(\bb_0))\mid \bb_0\in\supp(p_0)\}\) and \(T_1 = \{(\bb_1, p_1(\bb_1))\mid \bb_1\in\supp(p_1)\}\), and \Algref{alg:refine} from \Appref{app:refinement}.

We show the statement for the general case by showing an inequality for \(U_1\) and \(U_1\) by induction over the iteration number of the while loop, and then relating this inequality to \(T_0\) and \(T_1\). We index the state of \(U_0\) and \(U_1\) after the \(i\)\hyp th iteration by \(U_0^i\) and \(U_1^i\).

\xhdr{Base case} For \(U_0^0 = U_1^0 = \emptyset\) we have \(\sum_{(\bb_0,w_0)\in U_0^0} w_0 \Pr(V^{\bb_0}\in S) = 0 \leq 0 = e^{\varepsilon} \sum_{(\bb_1,w_1)\in U_1^0} w_1 \Pr(V^{\bb_1}\in S) + w_1 \delta\).

\xhdr{Induction step} Assume that
\begin{align*}
    &\sum_{(\bb_0,w_0)\in U_0^i} w_0 \Pr(V^{\bb_0}\in S)\\
    &\leq e^{\varepsilon} \sum_{(\bb_1,w_1)\in U_1^i} w_1 \Pr(V^{\bb_1}\in S) + \delta \sum_{(\bb_1,w_1)\in U_1^i} w_1.
\end{align*}
We look at the next iteration of the while loop and denote the tuples chosen in the \(i+1\)\hyp th iteration with the superscript \(i+1\). Assume w.l.o.g.\ that \(w_0^{i+1}\leq w_1^{i+1}\). Because of what we have shown so far, we know that \(\Pr(V^{\bb_0^{i+1}}\in S) \leq e^{\varepsilon} \Pr(V^{\bb_1^{i+1}}\in S) + \delta\) and hence
\begin{equation*}
    w_0^{i+1} \Pr(V^{\bb_0^{i+1}}\in S) \leq e^{\varepsilon} w_0^{i+1} \Pr(V^{\bb_1^{i+1}}\in S) + w_0^{i+1} \delta.
\end{equation*}
Thus
\begin{align*}
    &\sum_{(\bb_0,w_0)\in U_0^{i+1}} w_0 \Pr(V^{\bb_0}\in S)\\
    &= \sum_{(\bb_0,w_0)\in U_0^i} w_0 \Pr(V^{\bb_0}\in S) + w_0^{i+1} \Pr(V^{\bb_0^{i+1}}\in S)\\
    &\leq e^{\varepsilon} \sum_{(\bb_1,w_1)\in U_1^i} w_1 \Pr(V^{\bb_1}\in S)\\
    &\hphantom{\leq} + \delta \sum_{(\bb_1,w_1)\in U_1^i} w_1 + w_0^{i+1} \Pr(V^{\bb_0^{i+1}}\in S)\\
    &\leq e^{\varepsilon} \sum_{(\bb_1,w_1)\in U_1^i} w_1 \Pr(V^{\bb_1}\in S) + \delta \sum_{(\bb_1,w_1)\in U_1^i} w_1\\
    &\hphantom{\leq} + e^{\varepsilon} w_0^{i+1} \Pr(V^{\bb_1^{i+1}}\in S) + w_0^{i+1} \delta\\
    &= e^{\varepsilon} \sum_{(\bb_1,w_1)\in U_1^{i+1}} w_1 \Pr(V^{\bb_1}\in S) + \delta \sum_{(\bb_1,w_1)\in U_1^{i+1}} w_1.
\end{align*}
This shows the induction step.

Consider \(U_0\) and \(U_1\) after the last iteration. They now contain all tuples from \(T_0\) and \(T_1\), just some of them split up. Further, \(\sum_{(\bb_1,w_1)\in U_1} w_1 = 1\). Hence
\begin{equation*}
    \sum_{(\bb_0,w_0)\in T_0} w_0 \Pr(V^{\bb_0}\in S) \leq \sum_{(\bb_1,w_1)\in T_1} w_1 \Pr(V^{\bb_1}\in S) + \delta
\end{equation*}
and thus, by definition of \(T_0\) and \(T_1\),
\begin{align*}
    &\sum_{\bb_0\in \supp(p_0)} p_0(\bb_0) \Pr(V^{\bb_0}\in S)\\
    &\leq e^{\varepsilon} \sum_{\bb_1\in \supp(p_1)} p_1(\bb_1) \Pr(V^{\bb_1}\in S) + \delta.
\end{align*}
This concludes the proof.
\end{proof}

\section{Proof of \Thmref{thm:subsampling}}
\label{app:proof_subsampling}
\begin{proof}[Proof of \Thmref{thm:subsampling}]
We have
\begin{align*}
    &\Pr((r,M_1(\x_1),\dots,M_k(\x_k))\in S\mid \x_i\in U(\x_i^0,\x_i^1))\\
    &= \Pr((r,M_1(\x_1^{b_1}),\dots,M_k(\x_k^{b_k}))\in S\mid \bb\in U(B))\\
    &= \mathbb{E}_{\bb\sim U(B)}[\Pr(V^{\bb}\in S)]
\end{align*}
and also
\begin{equation*}
    \Pr((r,M_1(\x_1^0),\dots,M_k(\x_k^0))\in S) = \Pr(V^{\vec{0}}\in S),
\end{equation*}
and hence
\begin{equation*}
    \mathbb{E}_{\bb\sim U(B)}[\Pr(V^{\bb}\in S)] \leq e^{\tilde{\varepsilon}} \Pr(V^{\vec{0}}\in S) + \tilde{\delta}
\end{equation*}
and
\begin{equation*}
    \Pr(V^{\vec{0}}\in S) \leq e^{\tilde{\varepsilon}} \mathbb{E}_{\bb\sim U(B)}[\Pr(V^{\bb}\in S)] + \tilde{\delta}.
\end{equation*}
We further have
\begin{equation*}
    \mathbb{E}_{\bb\sim U(B)}[\Pr(V^{\bb}\in S)] = \frac{1}{2^k} \sum_{\bb\in B} \Pr(V^{\bb}\in S)
\end{equation*}
and
\begin{equation*}
    \mathbb{E}_{\bb\sim U(B\setminus\{\vec{0}\})}[\Pr(V^{\bb}\in S)] = \frac{1}{2^k - 1} \sum_{\bb\in B\setminus\{\vec{0}\}} \Pr(V^{\bb}\in S).
\end{equation*}
We will use this together with the following division of \(B\setminus\{\vec{0}\}\):
\begin{align*}
    B\setminus\{\vec{0}\}&=\bigcup_{i=1}^{k} B_i\text{, where}\\
    B_i&=\{\bb\in\{0,1\}^k\mid b_j = 0\text{ for all } j<i,\ b_i=1\}.
\end{align*}
Note that this is a disjoint union. We have \(|B_i| = 2^{k-i}\). Hence,
\begin{align*}
    &\mathbb{E}_{\bb\sim U(B\setminus\{\vec{0}\})}[\Pr(V^{\bb}\in S)]\\
    &= \frac{1}{2^k - 1} \sum_{b\in B\setminus\{\vec{0}\}} \Pr(V^{\bb}\in S)\\
    &= \frac{1}{2^k - 1} \sum_{i=1}^k \sum_{b\in B_i} \Pr(V^{\bb}\in S)\\
    &= \frac{1}{2^k - 1} \sum_{i=1}^k 2^{k-i}\frac{1}{2^{k-i}}\sum_{b\in B_i} \Pr(V^{\bb}\in S)\\
    &= \frac{1}{2^k - 1} \sum_{i=1}^k 2^{k-i}\mathbb{E}_{\bb\sim U(B_i)}[\Pr(V^{\bb}\in S)]\\
    &= \frac{1}{2^k - 1} \sum_{i=1}^k 2^{k-i}\Pr((r,M_1(\x_1^0),\dots,M_{i-1}(\x_{i-1}^0),M_i(\x_i^1),\\
    &\phantom{= \frac{1}{2^k - 1} \sum_{i=1}^k 2^{k-i}\Pr(} M_{i+1}(\x_{i+1})\dots,M_k(\x_k))\in S\\
    &\phantom{= \frac{1}{2^k - 1} \sum_{i=1}^k 2^{k-i}\Pr(} \mid \x_j\in U(\x_j^0,\x_j^1)\text{ for }i+1\leq j\leq k)).
\end{align*}
For HDP, we compare this quantity with \(\Pr((r,M_1(\x_1^0),\dots,M_k(\x_{i-1}^0)\in S)\), and thus get the guarantee from the theorem statement.
\end{proof}

\end{document}